\newtheorem{theorem}{Theorem}
\newtheorem{corollary}{Corollary}
\newtheorem{lemma}[theorem]{Lemma}
\newtheorem{assumption}{Assumption}
\newtheorem{proposition}{Proposition}
\newtheorem{remark}{Remark}
\newtheorem{problem}{Problem statement}
\begin{document}
	\let\WriteBookmarks\relax
	\def\floatpagepagefraction{1}
	\def\textpagefraction{.001}
	\shorttitle{Finite-Time Trajectory Tracking of a FWMR}
	\shortauthors{B Anil et~al}
	
	\title [mode = title]{Finite-Time Trajectory Tracking of a Four Wheeled Mecanum Robot}                      

\author[1]{Anil B}[orcid=0000-0001-5232-1314]
\cormark[1]
\ead{122004002@smail.iitpkd.ac.in}

\author[1]{Mayank Pandey}
\fnmark[1]

\author[1]{Sneha Gajbhiye}
\fnmark[2]
	
\affiliation[1]{organization={Department of Electrical Engineering, Indian Institute of Technology Palakkad},
		addressline={Kanjikode}, 
		city={Palakkad},
		postcode={678623}, 
		state={Kerala},
		country={India}}

\cortext[cor1]{Corresponding author.}
\fntext[fn1]{\textit{E-mail address:} mayankpandey20001@gmail.com (M. Pandey). }
\fntext[fn2]{\textit{E-mail address:} snehagajbhiye@iitpkd.ac.in (S. Gajbhiye). }

	\begin{abstract}
	Four Wheeled Mecanum Robot (FWMR) possess the capability to move in any direction on a plane making it a cornerstone system in modern industrial operations. Despite the extreme maneuverability offered by FWMR, the practical implementation or real-time simulation of Mecanum wheel robots encounters substantial challenges in trajectory tracking control. In this research work, we present a finite-time control law using backstepping technique to perform stabilization and trajectory tracking objectives for a FWMR system. A rigorous stability proof is presented and explicit computation of the finite-time is provided.  For tracking objective, we demonstrate the results taking an S-shaped trajectory inclined towards collision avoidance applications. Simulation validation in real time using Gazebo-ROS on a Mecanum robot model is carried out which complies with the theoretical results.
	\end{abstract}
	

	\begin{keywords}
		Nonlinear control \sep Mobile Robots \sep Backstepping 
	\end{keywords}
	\maketitle	
	\section{Introduction}
	
	Four Wheeled Mecanum Robots (FWMRs) exhibit the ability to navigate freely in all directions on a flat surface. This omnidirectional mobility feature makes them widely employed in various contexts, including industrial operations, constrained environments like warehouses, inspection zones, search and rescue missions, cleaning duties, and assistive robots \citep{zhang2023application}. Hence, across various sectors such as manufacturing, logistics, and unmanned warehousing, the mecanum wheel robot is renowned for its remarkable ability to execute zero-radius turns effortlessly in any direction as well as its adeptness in navigating within confined spaces \citep{taheri2020omnidirectional}. Its widespread adoption is a testament to its versatility and efficiency, making it a cornerstone technology in modern industrial operations. Consequently, FWMR's have garnered significant attention in research circles. To enable the mobile robot to follow the desired trajectory accurately, it is imperative to develop precise tracking controllers. A model predictive control strategy equipped with compensation for friction is designed in \citep{ref1} to facilitate tracking objective. Output-feedback strategy tailored with continuous sliding mode controllers is employed in \citep{ref2} for  addressing the robust trajectory tracking challenges encountered by FWMR. A sliding mode controller derived through backstepping technique tailored for robust control of FWMR  is presented in \citep{ref3}. A variant of computed-torque control approach is  developed in \citep{ref4} for addressing  the tracking control objective of FWMR. Researchers have developed a differential sliding mode controller aimed at facilitating the movement of an omnidirectional mobile platform in \citep{ref5}. A robust adaptive terminal sliding mode control scheme is designed in \citep{ref6} specifically to address the tracking challenges encountered by an FWMR targeted for examination of aircraft skin. Regarding transient performance, which aims to limit the tracking error within a specified range, a widely adopted strategy involves employing an error transformation function. This technique is illustrated in \citep{new19} to ensure the transient performance of underactuated surface vessels for executing trajectory tracking. 
	In \citep{new21}, the concept of an error transformation function is presented, which effectively converts the constrained tracking objective into an unconstrained stabilization problem for marine surface vessel. Additionally, in \citep{new22}, a performance function based on finite-time is implemented to converge the tracking errors within a specified range in the context of tracking problems involving systems with uncertain dynamics. The conventional asymptotic stability theory can solely ensure the asymptotic stability of a control object concerning finite-time convergence. The emergence of finite-time stability theory aims to achieve stabilization within a finite duration, ensuring that the tracking error converges within a bounded interval. Initially proposed by \citep{ref23}, subsequent variants of finite-time stability theory \citep{ref25,ref26,ref27,ref28,bohn2016almost} have been introduced for enhanced practical application. In \citep{ref29}, a finite-time adaptive neural controller, with optimized design parameters, is developed to address the finite-time optimal control challenges for a specific class of nonlinear systems. A finite-time tracking objective is addressed for a class of  nonlinear Multi-Input and Multi-Output (MIMO) systems in \citep{ref30} using neural networks in an adaptive setting, whose performance depends on the number of training samples considered. Despite the inherent advantages of extreme manueurability of FWMR, the practical implementation or realistic simulation of the Mecanum wheel robot faces significant hurdles in control, a critical aspect marked by numerous challenges. Factors such as nonlinearity, external perturbations, uncertainty etc.,  renders conventional control techniques inadequate for meeting stringent trajectory tracking demands. Furthermore, issues like significant drift from intended trajectories, oscillatory nature of control signals and time delays may arise as a consequence. Conventional control schemes often face difficulties to effectively tackle tracking control issues under these complex practical conditions resulting in inefficiencies, inaccuracies, and unreliability in handling control tasks \citep{tang2024mecanum}. Based on an elaborative literature survey, there is not much
	adequate research implemented in the validation of finite-time trajectory tracking controllers in a realistic 
	simulation environment for FWMR system. Towards this effort, our contributions  are as follows:
	\begin{enumerate}
		\item We present a finite-time backstepping tracking control to track a desired non-trivial trajectory. The proposed control law is finite-time continuous and gives finite-time global convergence of states to the equilibrium. 
		\item A closed loop stability proof in finite-time setting with an explicit computation of finite-time is provided.
		\item An estimation of upper norm bound of disturbance torque that can be tolerated with the proposed controller is provided and a case study associated with the same is demonstrated in the simulation analysis.
		\item The proposed control algorithm is validated in rGazebo ROS realistic simulation platform inorder to demonstrate the efficiency of the controller under such practical conditions as outlined above. 
		\item We also present a comparison of finite-time and asymptotic control results for the FWMR carried out in Gazebo-ROS.  
	\end{enumerate}
	Organization of the paper is as follows: Section \ref{sec:2} and  \ref{sec:3} discusses the mathematical preliminaries and system model respectively. The theoretical developments are presented in section \ref{sec:4}  followed by simulation results in \ref{sec:5} and conclusions in section \ref{sect:6}.
	
	\section{Mathematical preliminaries} \label{sec:2}
	In this section, we present the key elements necessary for the theoretical developments outlined in this paper.  \par 
	\textit{Notations:} $\mathbb{R}^n$ denotes the Euclidean space of dimension $n$. For any matrix $K$, $K^T$ and $K^{-1}$ denotes its transpose and inverse respectively. $\lambda_m(\cdot)$ and $\lambda_M(\cdot)$ denotes the minimum and maximum Eigen values of a given matrix respectively. We denote $\|\cdot\|$ as the standard Euclidean 2-norm of a given vector and $\log(\cdot)$ represents the logarithmic function.
	\par Consider a general nonlinear dynamical system:
	\begin{equation}
		\dot x(t) = f(x(t)), \ x(0)=x_0 \label{eqn1}
	\end{equation}
	where, $f:\mathcal{D}\rightarrow\mathbb{R}^n$ be a $\mathcal{C}^0$ (continuous) function with $\mathcal{D}\subset\mathbb{R}^n$ being an open neighbourhood of the origin and $f(0)=0$. This implies that $f^{-1}(0)\triangleq\{x\in\mathcal{D}:f(x)=0\}$ is non-empty. A $\mathcal{C}^1$ (continuously differentiable) function $x:\mathcal{I}\rightarrow\mathcal{D}$ is a solution of \eqref{eqn1} on the interval $\mathcal{I}\subset\mathbb{R}$, if $x$ satisfies \eqref{eqn1} $\forall t \in \mathcal{I}$. The function $f$ being continuous implies that $\forall$ $x_0\in\mathcal{D}, \ \exists \ t_0< 0<t_1$ and a solution $x(\cdot)$ of \eqref{eqn1} defined on $t\in(t_0,t_f)$ such that $x(0)=x_0$. Suppose for all initial conditions, \eqref{eqn1} possess unique solution except at the origin \citep{bohn2016almost}. This implies that for every $x\in\mathcal{D}\setminus \{0\}, \exists \ t_2 > 0$ and a unique solution $x(\cdot)$ of \eqref{eqn1} defined on $[0,t_2)$ satisfying $x(0)=x_0$.  
	\begin{proposition}\citep{ref23}
		Let $\mathcal{O}\subseteq\mathcal{D}$ be an open neighbourhood of origin and let $\mathcal{T}(x):\mathcal{O}\backslash\{0\} \rightarrow \mathbb{R}^+$ be the finite time. Suppose  the following statements hold.
		\begin{enumerate}
			\item[a)] For every $x\in\mathcal{O}\backslash\{0\}$, a unique solution $\kappa_t(x)$ is defined for $t\in[0,\mathcal{T}(x)]$, $\kappa_t(x)\in\mathcal{O}\backslash\{0\}$ and $\displaystyle{\lim_{t \to \mathcal{T}(x)} \kappa_t(x) = 0}$ implies finite-time convergence.
			\item[b)] Suppose for every open set $\mathcal{O}_\epsilon: 0\in\mathcal{O}_\epsilon\subseteq\mathcal{O}, \exists$ an open set $\mathcal{O}_\delta:0\in\mathcal{O}_\delta\subseteq\mathcal{O}$ such that for every $x\in\mathcal{O}_\delta\backslash\{0\}, \kappa_t(x) \in \mathcal{O}_\epsilon \ \forall \ t\in[0,\mathcal{T}(x)]$. This yields Lyapunov stability.
		\end{enumerate}
		Then, the origin is said to be a finite-time stable equilibrium of \eqref{eqn1}. It will be globally finite-time stable if $\mathcal{D}=\mathcal{O}=\mathbb{R}^n$.
	\end{proposition}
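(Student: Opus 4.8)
The plan is to observe that the proposition is, in essence, a repackaging of the notion of finite-time stability introduced in \citep{ref23}: hypothesis (a) is exactly finite-time convergence with settling-time function $\mathcal{T}$, and hypothesis (b) is exactly the $\varepsilon$--$\delta$ form of Lyapunov stability along the flow of \eqref{eqn1}. So rather than deriving estimates, the work is to glue the local semiflow $\kappa_t$ into a consistent forward-complete flow on $\mathcal{O}$ and then read off the definition. First I would extend $\kappa_t$ to all $t\ge 0$ by setting $\kappa_t(0)=0$ for every $t$ and, for $x\in\mathcal{O}\setminus\{0\}$, declaring $\kappa_t(x)=0$ whenever $t\ge\mathcal{T}(x)$. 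Using $f(0)=0$ (so the constant curve at the origin is a solution) and the standing uniqueness-away-from-the-origin assumption stated before the proposition, I would check that this extension is unambiguous --- a trajectory that has reached $0$ at time $\mathcal{T}(x)$ cannot subsequently leave it --- and that the semigroup identity $\kappa_{t+s}(x)=\kappa_t\big(\kappa_s(x)\big)$ persists on the extended time domain.

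With the extended flow in hand, the two desired properties fall out directly. From (a): for every $x\in\mathcal{O}\setminus\{0\}$ the orbit stays in $\mathcal{O}\setminus\{0\}$ on $[0,\mathcal{T}(x))$ with $\kappa_t(x)\to 0$ as $t\to\mathcal{T}(x)$ and $\mathcal{T}(x)<\infty$, which is precisely finite-time convergence. From (b): given any neighbourhood $\mathcal{O}_\epsilon$ of the origin, choose the $\mathcal{O}_\delta$ supplied by the hypothesis; then for $x\in\mathcal{O}_\delta\setminus\{0\}$ we have $\kappa_t(x)\in\mathcal{O}_\epsilon$ on $[0,\mathcal{T}(x)]$, while $\kappa_t(x)=0\in\mathcal{O}_\epsilon$ for $t>\mathcal{T}(x)$ and $\kappa_t(0)=0\in\mathcal{O}_\epsilon$, so $\kappa_t(\mathcal{O}_\delta)\subseteq\mathcal{O}_\epsilon$ for all $t\ge 0$ --- Lyapunov stability of the origin. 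The conjunction of finite-time convergence and Lyapunov stability is the definition of a finite-time stable equilibrium of \eqref{eqn1}, which proves the local assertion. For the global statement I would simply note that no step used boundedness of $\mathcal{O}$, so setting $\mathcal{D}=\mathcal{O}=\mathbb{R}^n$ promotes the conclusion to global finite-time stability.

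I expect the main obstacle to be the consistency bookkeeping around the settling time rather than any single inequality: one must rule out that the zero-extension past $\mathcal{T}(x)$ collides with another solution branch, and verify that $\mathcal{T}$ is finite-valued on the \emph{whole} punctured neighbourhood $\mathcal{O}\setminus\{0\}$ and not merely on some smaller set. Both are handled by leaning on $f(0)=0$ together with the uniqueness hypothesis, which jointly force the gluing. A secondary point is making sure an orbit does not exit $\mathcal{O}$ before it converges; this is exactly why (b) is phrased on the closed interval $[0,\mathcal{T}(x)]$, and I would use that closed-interval containment to keep the trajectory within $\mathcal{O}$ all the way to the origin.
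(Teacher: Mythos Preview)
The paper does not prove this proposition at all: it is quoted from \citep{ref23} and is, in effect, a \emph{definition} of finite-time stability (note the wording ``the origin is \emph{said to be} a finite-time stable equilibrium''). There is therefore no proof in the paper to compare your proposal against.

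Your write-up is a perfectly sensible unpacking of why the two hypotheses deserve the names ``finite-time convergence'' and ``Lyapunov stability,'' together with the bookkeeping that extends the flow past the settling time. That is fine as exposition, but for the purposes of this paper no argument is required: the proposition is simply a cited definition, and the authors move on immediately to Lemma~\ref{lem1}.
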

	\begin{lemma}[\citep{ref23}]\label{lem1}
		Let $c>0$ be a positive real number, $\alpha\in(0,1)$ and $V:\mathcal{D}\rightarrow\mathbb{R}$ be a $\mathcal{C}^1$ function. Suppose $\exists$ a neighborhood $\mathcal{N}\subset\mathcal{D}$ such that $V$ is positive definite on $\mathcal{N}$ and $\dot{V}+cV^\alpha$ is negative semi-definite on $\mathcal{N}$, where $\dot{V}(x)= \frac{\partial V}{\partial x}f(x)$. Then, the origin is said to be finite-time stable equilibrium of \eqref{eqn1}. Furthermore, if $\mathcal{T}(x)$ is the finite-time, then $\mathcal{T}(x)\leq\frac{1}{c(1-\alpha)}V(0)^{1-\alpha} \forall x\in\mathcal{N}$. 
	\end{lemma}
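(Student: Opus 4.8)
The plan is to verify separately the two defining properties in the Proposition—Lyapunov stability (item b) and finite-time convergence (item a)—and then to extract the settling-time bound from the convergence argument.

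First I would settle Lyapunov stability. Since $V$ is $\mathcal{C}^1$ and positive definite on $\mathcal{N}$, and $\dot V(x)=\frac{\partial V}{\partial x}f(x)\le -cV(x)^\alpha\le 0$ on $\mathcal{N}$ (because $c>0$ and $V\ge 0$ there), $V$ is a weak Lyapunov function for \eqref{eqn1}. The classical Lyapunov stability argument then applies verbatim: given any open $\mathcal{O}_\epsilon\ni 0$, by positive definiteness and continuity of $V$ one picks a sublevel set of $V$ contained in $\mathcal{O}_\epsilon\cap\mathcal{N}$, and a corresponding $\mathcal{O}_\delta\ni 0$ inside it; since $V$ is nonincreasing along trajectories, any solution starting in $\mathcal{O}_\delta\setminus\{0\}$ stays in that sublevel set, hence in $\mathcal{O}_\epsilon$, for all $t\in[0,\mathcal{T}(x))$. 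In particular, shrinking the initial neighborhood if needed, trajectories remain in $\mathcal{N}$, so the differential inequality $\dot V\le -cV^\alpha$ is valid along the whole trajectory.

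Next, finite-time convergence. Fix $x_0\in\mathcal{N}\setminus\{0\}$, let $x(\cdot)$ be the solution (unique away from the origin), and set $v(t)=V(x(t))$, a $\mathcal{C}^1$ scalar function with $\dot v(t)\le -c\,v(t)^\alpha$ while $x(t)\in\mathcal{N}\setminus\{0\}$. On any subinterval where $v>0$, dividing by $v^\alpha$ and integrating gives
\[
\frac{v(t)^{1-\alpha}-v(0)^{1-\alpha}}{1-\alpha}\le -ct,
\qquad\text{i.e.}\qquad
v(t)^{1-\alpha}\le v(0)^{1-\alpha}-c(1-\alpha)t .
\]
The right-hand side vanishes at $t^\ast=\dfrac{v(0)^{1-\alpha}}{c(1-\alpha)}$, so, since $v\ge 0$, necessarily $v(t)=0$ for some $t\le t^\ast$. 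By positive definiteness and continuity of $V$, $v(t)=0$ forces $x(t)=0$; taking $\mathcal{T}(x_0)$ to be the first such instant yields $\mathcal{T}(x_0)\le \dfrac{V(x_0)^{1-\alpha}}{c(1-\alpha)}$. Together with the stability property, the hypotheses of the Proposition are met, so the origin is finite-time stable with the claimed settling-time estimate; if $\mathcal{D}=\mathcal{N}=\mathbb{R}^n$ the argument is global.

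The main obstacle, and the step that deserves care rather than routine computation, is making the integration rigorous despite the facts that solutions of $\dot x=f(x)$ need not be unique at the origin and that $s\mapsto s^\alpha$ is not Lipschitz at $s=0$. I would handle this with the comparison lemma: let $y(\cdot)$ solve $\dot y=-c\,y^\alpha$, $y(0)=V(x_0)$, whose explicit solution is $y(t)=\bigl(V(x_0)^{1-\alpha}-c(1-\alpha)t\bigr)^{1/(1-\alpha)}$ for $t\le t^\ast$ and $y(t)\equiv 0$ afterwards, and conclude $V(x(t))\le y(t)$ for as long as $x(t)$ stays in $\mathcal{N}$. This gives the displayed inequality without ever dividing by a possibly-vanishing quantity, and the nonincrease of $V$ plus positive definiteness guarantees the trajectory stays at the origin once it arrives, so $\mathcal{T}(x)$ is well defined and the bound holds for every $x\in\mathcal{N}$.
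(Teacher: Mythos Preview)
Your proof is correct and is essentially the standard argument from the cited reference. Note, however, that the paper does not give its own proof of this lemma: it is stated as a quoted result from \citep{ref23} (Bhat and Bernstein) and used without proof. Your argument---Lyapunov stability from $\dot V\le 0$ via sublevel sets, then integration of the scalar differential inequality $\dot v\le -cv^\alpha$ (or, more carefully, the comparison lemma against $\dot y=-cy^\alpha$) to obtain the settling-time bound $\mathcal{T}(x)\le V(x)^{1-\alpha}/[c(1-\alpha)]$---is exactly the proof given in that reference, so there is nothing to contrast.
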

	
	\begin{lemma}[\citep{khalil2002nonlinear} Rayleigh-Ritz inequality]\label{ritz theorem} Consider a symmetric positive definite matrix, $P \in \mathbb{R}^{n \times n}$ and let $\lambda\in\mathbb{R}^{n}$ be the eigenvalues of $P$. Then the following inequality,
		\begin{equation} \label{ritz positive sign}
			\lambda_{m}(P)\left\|{x}\right\|^2 \leq x^{T}Px \leq \lambda_{M}(P)\left\|{x}\right\|^2  \quad \forall \ x\in \mathbb{R}^{n},
		\end{equation}
		holds, where, $\lambda_{m}(\cdot)$ and $\lambda_{M}(\cdot)$ denotes the minimum and maximum eigenvalues respectively.
	\end{lemma}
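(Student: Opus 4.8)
The plan is to invoke the spectral theorem for real symmetric matrices and to reduce the quadratic form $x^{T}Px$ to a diagonal one by an orthogonal change of coordinates, which preserves the Euclidean norm. Concretely, since $P$ is symmetric there exist an orthogonal matrix $Q\in\mathbb{R}^{n\times n}$, i.e. $Q^{T}Q=I$, and a diagonal matrix $\Lambda=\mathrm{diag}(\lambda_1,\dots,\lambda_n)$ whose diagonal entries are the eigenvalues of $P$, such that $P=Q\Lambda Q^{T}$. Positive definiteness of $P$ ensures $\lambda_i>0$ for all $i$, and without loss of generality the eigenvalues may be ordered so that $\lambda_{m}(P)=\lambda_1\le\cdots\le\lambda_n=\lambda_{M}(P)$.

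Next I would introduce the change of variables $y=Q^{T}x$. Because $Q$ is orthogonal, $\|y\|^{2}=y^{T}y=x^{T}QQ^{T}x=x^{T}x=\|x\|^{2}$, so the substitution is an isometry and, as $x$ ranges over $\mathbb{R}^{n}$, so does $y$. Substituting into the quadratic form gives $x^{T}Px=x^{T}Q\Lambda Q^{T}x=y^{T}\Lambda y=\sum_{i=1}^{n}\lambda_i y_i^{2}$.

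The last step is the elementary scalar estimate: since $\lambda_{m}(P)\le\lambda_i\le\lambda_{M}(P)$ and $y_i^{2}\ge 0$ for each $i$, summing over $i$ yields
\[
\lambda_{m}(P)\sum_{i=1}^{n} y_i^{2}\;\le\;\sum_{i=1}^{n}\lambda_i y_i^{2}\;\le\;\lambda_{M}(P)\sum_{i=1}^{n} y_i^{2}.
\]
Recognising $\sum_{i=1}^{n} y_i^{2}=\|y\|^{2}=\|x\|^{2}$ and $\sum_{i=1}^{n}\lambda_i y_i^{2}=x^{T}Px$ then gives exactly \eqref{ritz positive sign}.

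There is no genuine obstacle here; the result is classical and the argument is routine once the spectral theorem is in hand. The only points that merit care are citing the diagonalisation correctly (it requires \emph{symmetry} of $P$, not merely that it is square) and noting that it is precisely the orthogonality of $Q$ that makes the coordinate change norm-preserving, which is what allows the eigenvalue bounds to transfer directly into bounds in the original Euclidean norm. One may additionally observe that both inequalities are sharp, being attained at unit eigenvectors associated with $\lambda_{m}(P)$ and $\lambda_{M}(P)$ respectively, although this is not needed for the statement.
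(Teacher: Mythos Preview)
Your argument is correct and is the standard proof via orthogonal diagonalisation. The paper itself does not prove this lemma; it simply quotes it from \cite{khalil2002nonlinear} as a known result, so there is nothing to compare against beyond noting that your spectral-theorem route is exactly the classical justification one finds in that reference.
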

	\begin{lemma}\label{lemma inequality}
		Suppose $a_1$, $a_2$ are positive real numbers and let $c \in(0,1)$, then the following inequality holds: 
		\begin{equation*}
			(a_1+a_2)^c \leq(a_1^c + a_2^c).
		\end{equation*}
	\end{lemma}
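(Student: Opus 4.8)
The plan is to reduce this two-variable inequality to a one-variable statement by homogeneity. Both sides of $(a_1+a_2)^c \le a_1^c + a_2^c$ scale identically under $a_i \mapsto \lambda a_i$ for $\lambda > 0$, so I would divide through by $(a_1+a_2)^c$ and set $t = a_1/(a_1+a_2)$, which puts $t \in (0,1)$ and $1-t = a_2/(a_1+a_2) \in (0,1)$. The claim then becomes the purely scalar inequality $1 \le t^c + (1-t)^c$.

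Next I would record the elementary fact that for any $s \in (0,1)$ and $c \in (0,1)$ one has $s^c \ge s$. This holds because $s < 1$ forces $\log s < 0$, hence $c \log s > \log s$ (multiplying the negative quantity $\log s$ by $c \in (0,1)$ moves it toward zero), and applying the increasing exponential gives $s^c = e^{c\log s} \ge e^{\log s} = s$. Invoking this with $s = t$ and with $s = 1-t$ and adding the two bounds yields $t^c + (1-t)^c \ge t + (1-t) = 1$, which is exactly what is required; undoing the normalization recovers the stated inequality.

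As an alternative route I would exploit concavity of $\phi(x) = x^c$ on $[0,\infty)$ together with $\phi(0) = 0$: for $\lambda \in [0,1]$ one has $\phi(\lambda x) = \phi\bigl(\lambda x + (1-\lambda)\cdot 0\bigr) \ge \lambda \phi(x)$, and writing $a_1 = \tfrac{a_1}{a_1+a_2}(a_1+a_2)$ and $a_2 = \tfrac{a_2}{a_1+a_2}(a_1+a_2)$ and summing the two resulting inequalities gives $a_1^c + a_2^c \ge (a_1+a_2)^c$. A third, purely differential option: fix $a_2$, set $f(x) = x^c + a_2^c - (x+a_2)^c$, observe $f(0) = 0$, and check $f'(x) = c\bigl(x^{c-1} - (x+a_2)^{c-1}\bigr) \ge 0$ because $x \mapsto x^{c-1}$ is nonincreasing when $c-1 < 0$; monotonicity then gives $f(x) \ge 0$ for all $x \ge 0$.

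There is no substantive obstacle here. The only point demanding a moment's care is the direction of the scalar inequality $s^c \ge s$ on $(0,1)$ — equivalently, the sub/super-homogeneity bookkeeping in the concavity argument — since this inequality reverses for $s > 1$; tracking the fact that the normalized variables genuinely lie in $(0,1)$ is essentially the whole content of the proof.
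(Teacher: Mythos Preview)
Your proposal is correct, and in fact your second (``alternative'') route is essentially the paper's own argument: the paper shows that any concave $f$ with $f(0)\ge 0$ satisfies $f(\lambda x)\ge \lambda f(x)$ by plugging $a_2=0$ into the concavity inequality, then takes $\lambda=\tfrac{a_1}{a_1+a_2}$ and $\lambda=\tfrac{a_2}{a_1+a_2}$ with $x=a_1+a_2$ and adds, exactly as you describe, before specializing to $f(x)=x^c$.

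Your primary route is a more elementary variant of the same normalization idea: after dividing through by $(a_1+a_2)^c$ you bypass any appeal to concavity and instead use only the scalar fact $s^c\ge s$ for $s\in(0,1)$, which you justify directly via $\log$. This is arguably cleaner for the specific power function, since it avoids invoking the general concave-subadditive machinery; the paper's approach, on the other hand, makes explicit that the inequality is really an instance of the general principle ``concave $+$ vanishing at zero $\Rightarrow$ subadditive,'' which is the form in which such bounds are often recognized and reused. Your third (monotonicity of $f(x)=x^c+a_2^c-(x+a_2)^c$) is a genuinely different, self-contained calculus argument that neither the paper nor your other two routes need, but it is also perfectly valid.
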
	
	\begin{proof}
		To aid the proof, we make use of the subadditivity property of a concave function. \newline \textit{Subadditivity of a function}:  A function $f:\mathbb{R}_{\geq 0} \supseteq \mathcal{U}\rightarrow\mathbb{R}$ is said to be subadditive on $\mathcal{U}$ if $\forall a_1,a_2 \in \mathcal{U}$ such that $a_1+a_2\in\mathcal{U}$, the following inequality holds:
		$f(a_1+a_2)\leq f(a_1)+f(a_2).$
		\newline \textit{Concavity}: Let $c\in[0,1]$ and $a_1,a_2 \in \mathcal{U}$, a concave function $f:\mathcal{U}\rightarrow\mathbb{R}$ with $f(0)\geq0$ satisfying,
		\begin{equation}\label{subadd}
			f(ca_1+(1-c)a_2) \geq cf(a_1)+(1-c) f(a_2)
		\end{equation}
		is subadditive. Choose $a_2=0$ in \eqref{subadd}, this yields $f(ca_1)\geq cf(a_1)$. Suppose $c=\frac{a_1}{a_1+a_2}$, then we have,
		\begin{equation*}\label{Aeq1}
			f(a_1)=f(c(a_1+a_2))\geq cf(a_1+a_2)
		\end{equation*}
		\begin{equation*}\label{Aeq2}
			f(a_2)=f((1-c)(a_1+a_2))\geq (1-c)f(a_1+a_2).
		\end{equation*}
		Adding the above two equations yields,
		\begin{equation*}
			f(a_1+a_2) \leq f(a_1)+f(a_2)\implies \mbox{subadditivity}.
		\end{equation*}
		By considering $f(a_1)=a_1^c$, $f(a_2)=a_2^c$ and $f(a_1+a_2)=(a_1+a_2)^c$, yields the proof.
	\end{proof}

	\section{Mathematical model of FWMR} \label{sec:3}
	Consider  a Four Wheeled Mecanum Robot (FWMR) model shown in Figure  \ref{fig:FMWMR}. 
	\begin{figure}[htpb!]
		\centering
		\includegraphics[scale=0.32]{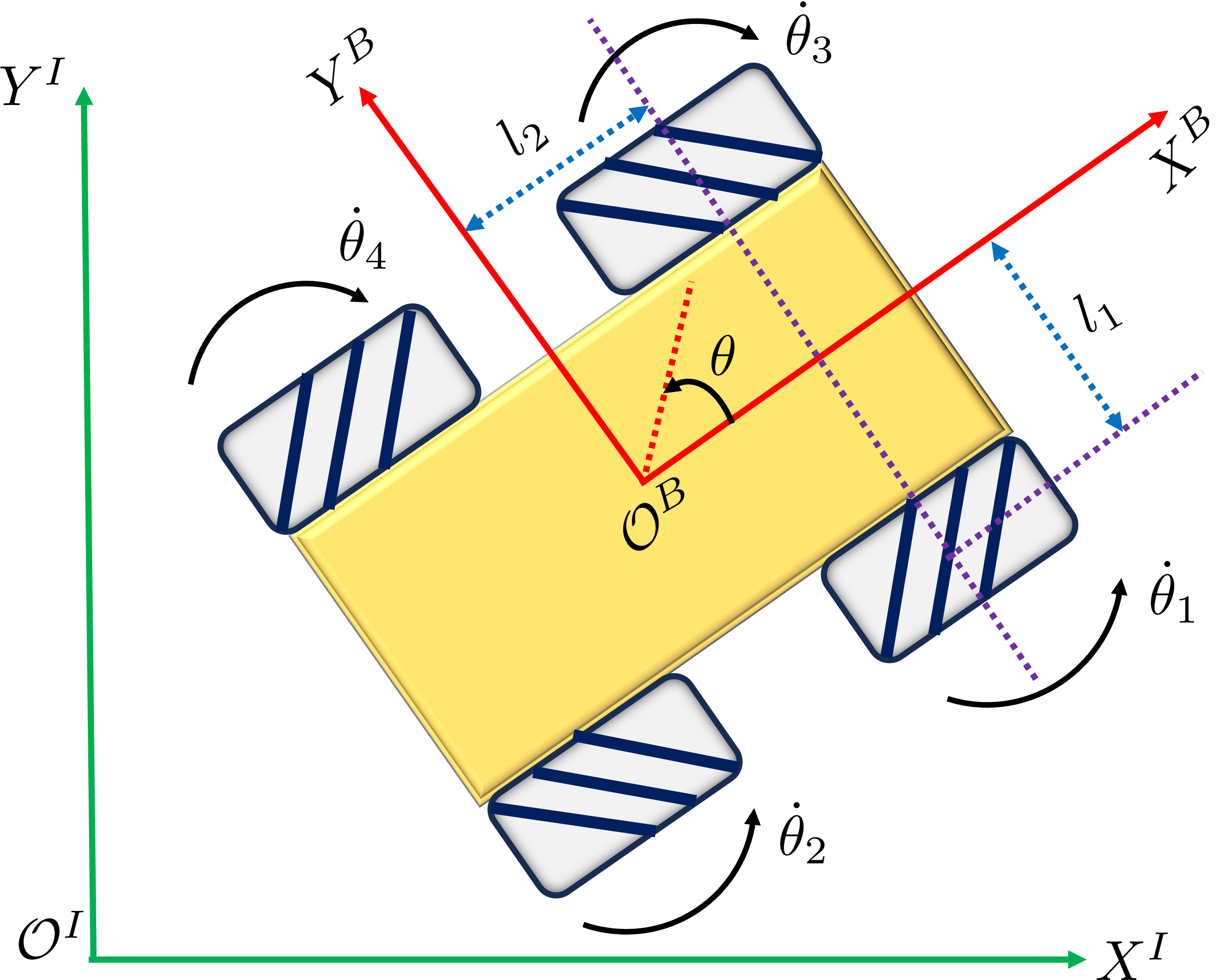}
		\caption{Schematic representation of a Four  Wheeled Mecanum Robot.}
		\label{fig:FMWMR}
	\end{figure}
	It forms a unique type of omnidirectional mobile robot, characterized by a rectangular configuration with four specialized mecanum wheels placed symmetrically at the corners. Each mecanum wheels are equipped with a series of passive rollers along its circumference. These rollers are typically oriented with their axis of rotation at an angle of $45^\circ$ with the wheel roll axis. This distinctive design affords FWMR an additional degree of freedom compared to the conventional wheeled mobile robots. Consequently, FWMR possess the capability to maneuver in any direction without necessitating reorientation.
	It contains four mecanum wheels and $\dot \theta_i$ is the angular velocity of the robot $\mathcal{O}^B$. $l_1$ and $l_2$ are the lengths from the center of mass and center of wheels of FWMRs.
	In order to describe the pose of the vehicle, we consider two coordinate frames which are mutually orthonormal to each other. Suppose $(X^B,Y^B)$ denotes coordinates of position expressed in vehicle fixed frame $\mathcal{O}^B$ located at the center of mass of the vehicle. Let  $(X^I,Y^I)$ be the corresponding coordinates expressed in earth fixed frame $\mathcal{O}^I$ as shown in Figure \ref{fig:FMWMR}. Suppose $q =(x,y,\theta)\in\mathcal{D}\subset\mathbb{R}^2\times[0,2\pi)$ denotes the pose of the vehicle expressed in $\mathcal{O}^I$ frame and let $\dot q = (\dot x,\dot y,\dot \theta)\in\mathbb{R}^3$ denotes the corresponding inertial fixed velocity of the robot. The transformation which relates the velocities in $\mathcal{O}^I$ and $\mathcal{O}^B$ frames is given by,  
	\begin{equation} \label{relation in speeds}
		\dot q^{B} = Q \dot q, \quad Q = \begin{bmatrix}
			\cos\theta& \sin\theta & 0\\
			-\sin\theta & \cos\theta & 0\\
			0   &      0  &     1
		\end{bmatrix},
	\end{equation}
	where, $Q\in \mathbb{R}^{3\times3}$ represents the transformation matrix and  $\dot q^{B}$ represents the robot velocity in $\mathcal{O}^B$ frame.
	Let $ \rho  = (\theta_1, \theta_2, \theta_3, \theta_4)$ represents the angular position of the wheels. 
	Wheel coordinate velocities can be related to body coordinate velocities using simple geometrical relationship from Figure \ref{fig:FMWMR} as,
	$
	\dot q^{B} = J\dot{\rho}, \quad J = \frac{r}{4}\begin{bmatrix}
		-1 & 1 & -1 & 1\\
		1  & 1 & 1 & 1 \\
		\frac{1}{l_1+l_2} & -\frac{1}{l_1+l_2}  & -\frac{1}{l_1+l_2}  & \frac{1}{l_1+l_2}
	\end{bmatrix},
	$
	where, $J$ is the transformation matrix. $l_1$  and $l_2$ are the distance of center of mass of the robot from the respective wheel centers as shown in Figure \ref{fig:FMWMR}.
	Wheel coordinate velocities are given by,
	\begin{equation}
		\dot \rho =  J^\dag\dot q^{B} \label{rho dot}, \quad  J^\dag =  \begin{bmatrix}
			-1  & 1  & l_1+l_2\\
			1  & 1  & -l_1-l_2\\
			-1  &  1  & -l_1-l_2\\
			1   &  1   & l_1+l_2
		\end{bmatrix},
	\end{equation}
	where,  $J^\dag$ denotes the right pseudo inverse of $J$. 
	Dynamical model of FWMR obtained using Lagrange D'Alembert's formulation \citep{yuan2020trajectory,8267134,zhao2022fixed} is given by,  
	\begin{equation} \label{equ of motion in rho}
		M\ddot{\rho}+D\dot{\rho} = \tau + \tau_d,
	\end{equation}
	\begin{equation*}
		\mbox{where,} \ M = \begin{bmatrix}
			\Gamma & -m_1  & m_1 & m_2-m_1\\
			-m_1  & \Gamma  &  m_2-m_1  & m_1\\
			m_1 & m_2-m_1 & \Gamma & -m_1\\
			m_2-m_1  & m_1  & -m_1  & \Gamma\\
		\end{bmatrix},
	\end{equation*}
	$m_1 =\frac{I_zr^2}{16(l_1+l_2)^2}, \ m_2 =\frac{mr^2}{8}, \ D = \mbox{diag}(D_1, D_2, D_3, D_4)$ and $\Gamma=m_1+m_2+I_m$.
	Control input, $\tau \triangleq (\tau_1, \tau_2, \tau_3, \tau_4)\in\mathbb{R}^4$ represents the torques acting on four wheels of the robot; $I_m$ and $I_z$ denotes the moment of inertia of wheels and the robot respectively; $m$ denotes mass of the robot; $r$ is the wheel radius; $D_i$'s are viscous friction coefficients of the four wheels; $\tau_d\in\mathbb{R}^4$ forms the external disturbances including static friction. Using kinematics \eqref{relation in speeds} and  \eqref{rho dot}, we get the acceleration relationship as,
	\begin{equation} \label{ddot rho}
		\ddot \rho = J^\dag Q\ddot q=J^\dag \ddot q^B,
	\end{equation}
	using \eqref{relation in speeds}  and \eqref{ddot rho} in \eqref{equ of motion in rho} yields,
	\begin{equation} \label{equ of motion in q}
		MJ^\dag \ddot q^B + DJ^\dag \dot q^B = \tau + \tau _d,
	\end{equation}
	To make ease of control synthesis, we apply a transformation to convert the above four-dimensional torque expressed in wheel coordinates into  body fixed torque in $\mathbb{R}^3.$
	Multiply by $JM^{-1}$ in \eqref{equ of motion in q} and rearranging yields,
	\begin{equation} \label{dyanmics after multiplying}
		\ddot q^B + JM^{-1}DJ^\dag Q\dot q^B = JM^{-1}\tau +JM^{-1} \tau_d,
	\end{equation}
	which can be expressed as,
	\begin{equation} \label{final dynamics}
		\ddot q^B =- JM^{-1}DJ^\dag Q\dot q^B +JM^{-1}\tau +JM^{-1} \tau_d.
	\end{equation}
	For notational simplicity, let $\eta=q$ and $\nu=\dot q^B$. Then, the second-order unperturbed ($\tau_d=0$) dynamical system can be expressed by two first-order systems as,
	\begin{equation}\label{Kinamtics and dynamics with error}
		\begin{aligned}
			\dot \eta =  \bar{Q}\nu, \quad   
			\dot \nu = F(\eta)\nu+\bar{\tau} ,
		\end{aligned}
	\end{equation}
	where, $\bar{Q}=Q^T$, $ F(\eta)=-JM^{-1}DJ^\dag Q$ and $\bar{\tau}=JM^{-1}\tau$. One can observe that the dimension of control input is now reduced from four to three which makes the subsequent controller synthesis easier. \newline These two equations governing the kinematics and dynamics of FWMR represents the complete system dynamics which serves as the starting point for the synthesis of finite-time tracking control law which is discussed in the next section.
	\begin{assumption}\label{assum1}
		The desired center of mass position and orientation of FWMR, $\eta_d(t)$ is $\mathcal{C}^2$ with its first and second derivatives are bounded.
	\end{assumption}
	\begin{problem}
		Consider the dynamics \eqref{Kinamtics and dynamics with error}, with the feedback law $\bar{\tau}$. Given a desired trajectory $\eta_d(t)$ satisfying \textit{Assumption \ref{assum1}}, our aim is to design the feedback law $\bar{\tau}$, such that the vehicle pose error, $\tilde{\eta}(t)\triangleq \eta(t)- \eta_d(t)$  converges to the origin in finite-time.
	\end{problem}

	\section{Theoretical developments} \label{sec:4}
	With the complete dynamical  model of FWMR described in section \ref{sec:3}, designing a finite-time tracking control law using backstepping technique is outlined in this section.  A schematic representation of the proposed control architecture is shown in Figure \ref{fig:block}. 
	\begin{figure*}[htpb!]
		\centering
		\includegraphics[scale=0.35]{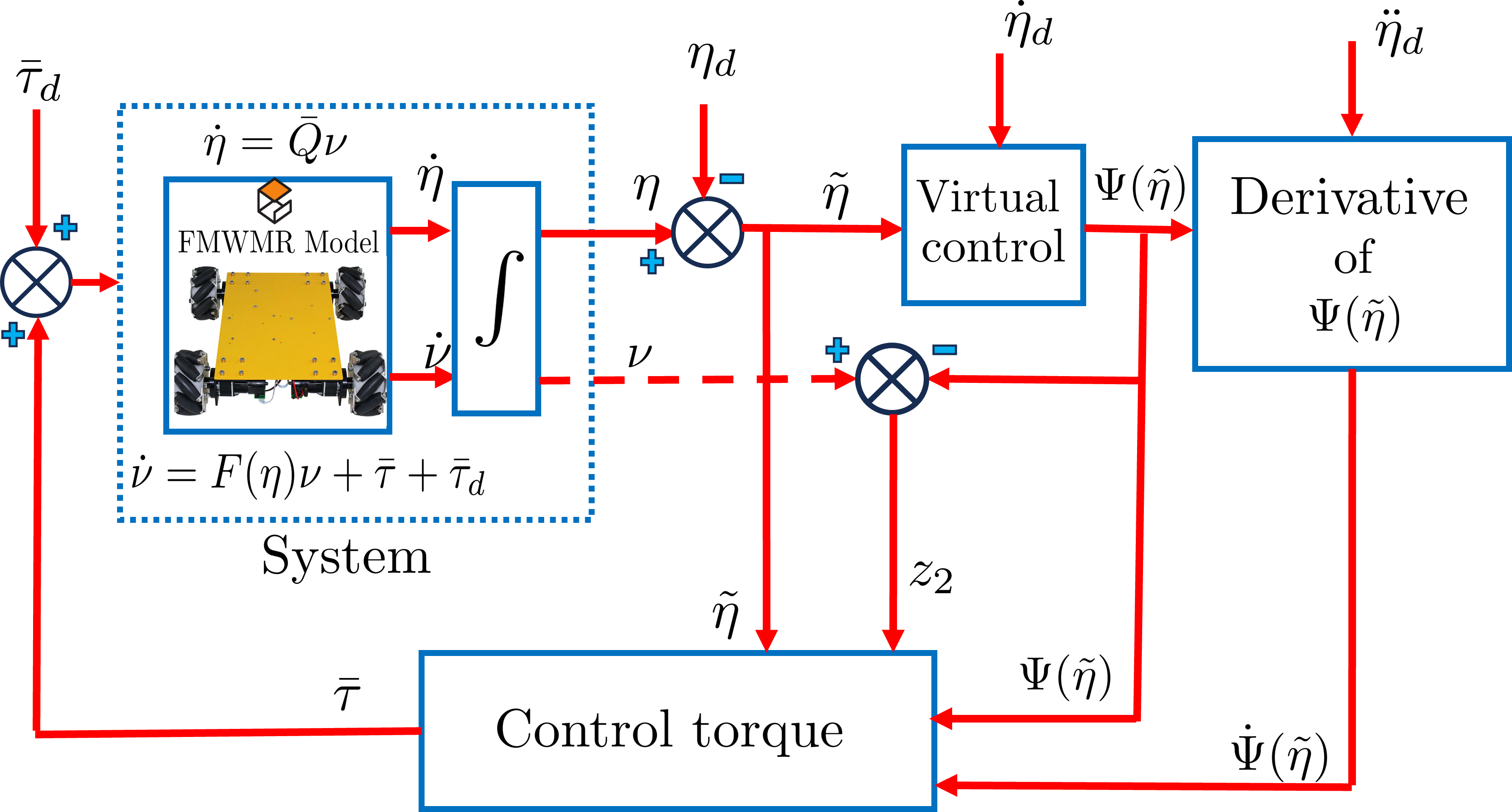}
		\caption{Schematic representation of the proposed control architecture.}
		\label{fig:block}
	\end{figure*}
	Let  $\eta_d\in\mathcal{D}$ denotes desired pose (position and orientation) of the vehicle in $\mathcal{O}^I$ frame. Suppose $\nu_d\in \mathbb{R}^3$ and $\dot{\eta}_d\in \mathbb{R}^3$ be the desired velocities of the robot  expressed in $\mathcal{O}^B$ and $\mathcal{O}^I$ frames respectively. We define the tracking error for pose of the vehicle as, $\tilde{\eta}\triangleq\eta-\eta_d\in\mathcal{D}$ satisfying assumption \ref{assum1} . Consider a Lyapunov function $V_1:\mathcal{D}\rightarrow\mathbb{R}$ for the configuration error subsystem,
	\begin{equation} \label{first Lyapunov funtion}
		\begin{aligned}
			V_1(\tilde{\eta}) = \frac{1}{2}\tilde{\eta}^{T}\tilde{\eta}.
		\end{aligned}
	\end{equation}
	Taking the time derivative of $V_1(\tilde{\eta})$ yields,
	\begin{equation}\label{virtual}
		\dot{V}_1(\tilde{\eta})=\tilde{\eta}^T\dot{\tilde{\eta}}={\tilde{\eta}}^{T}( {\dot{\eta} - \dot{\eta}_d})={\tilde{\eta}}^{T}( {Q{\nu} - \dot{\eta}_d}),
	\end{equation}
	Next we choose the virtual control, $\Psi(\tilde{\eta})\in\mathcal{C}^1$ for position error subsystem as,
	\begin{equation}\label{virtualc}
		\nu^\ast\triangleq\Psi(\tilde{\eta})= \bar{Q}^{T} \left(\dot{\eta}_d-\frac{K_\eta\tilde{\eta}}{\left\|{\tilde{\eta}}\right\|^{(1-\alpha)}}\right),
	\end{equation}
	where, $K_\eta\in\mathbb{R}^{3\times3}$ is a symmetric positive definite matrix and the parameter $\alpha$ is further chosen (refer, \textit{Lemma} \ref{lemma1}) to ensure the boundedness of feedback torque input designed through backstepping procedure. We denote the virtual velocity error for velocity subsystem as, $z_2\triangleq\nu-\Psi(\tilde{\eta})$. Thus in terms of transformed error coordinates, the system \eqref{Kinamtics and dynamics with error} can be re-expressed as,
	\begin{equation}\label{Kinamtics and dynamics with error1}
		\begin{aligned}
			\dot {\tilde{\eta}}& =  Q(z_2+\Psi(\tilde{\eta}))-\dot\eta_d,\\  
			\dot z_2&= F(\tilde{\eta})(z_2+\Psi(\tilde{\eta}))-\dot{\Psi}(\tilde{\eta})+\bar{\tau} .
		\end{aligned}
	\end{equation}
	
	Suppose $\lambda_m(K_\eta)$ denotes the minimum eigenvalue of $K_\eta$, then putting \eqref{virtualc} in \eqref{virtual} yields,
	\begin{equation}\label{ineq1}
		\begin{aligned}
			\dot V_1(\tilde{\eta}) \leq& -{\lambda_m(K_\eta)}\frac{\tilde{\eta}^{T}\tilde{\eta}}{\left\|\tilde\eta\right\|^{(1-\alpha)}}
			\leq -{\lambda_m(K_\eta)}\frac{\left\|\tilde\eta\right\|^{2}}{\left\|\tilde\eta\right\|^{(1-\alpha)}},\\
			\leq&-{\lambda_m(K_\eta)}{\left\|\tilde\eta\right\|^{(1+\alpha)}}
			\leq-{\lambda_m(K_\eta)}\left({\left\|\tilde\eta\right\|^{2}}\right)^{\frac{(1+\alpha)}{2}},\\
			\leq& -{\lambda_m(K_\eta)}(2V_1)^{\frac{(1+\alpha)}{2}}\leq-c_1V_1^\beta,
		\end{aligned}
	\end{equation}
	where, $c_1={\lambda_m(K_\eta)}2^\beta$ and $\beta=\frac{(1+\alpha)}{2}$. Therefore, by making use of $\textit{Lemma \ref{lem1}}$, configuration error subsystem, $\tilde{\eta}$ is finite-time stable about the origin with the convergence time  calculated as,
	\begin{equation}
		\mathcal{T}_\eta\leq\frac{V_1^{1-\beta}(\tilde{\eta}(0))}{c_1{(1-\beta)}},
	\end{equation}
	where,  $\tilde{\eta}(0)$ is the initial condition of $\tilde{\eta}$ at $t=0$, $\mathcal{T}_\eta$ denotes the finite-time for configuration error subsystem and $V_1(\tilde{\eta}(0))$ represents the Lyapunov function evaluated at  $t=0$.
	\begin{lemma} \label{lemma1}
		Range of $\alpha$ is $\frac{1}{2}<\alpha<1$ for the system \eqref{Kinamtics and dynamics with error1}.
	\end{lemma}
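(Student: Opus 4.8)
The plan is to trace where the exponent $\alpha$ enters the backstepping construction and to identify the two competing requirements that pin it into the interval $(\tfrac12,1)$. The upper bound $\alpha<1$ is already forced by Lemma~\ref{lem1}, which requires the fractional power in $\dot V + cV^\alpha$ to satisfy $\alpha\in(0,1)$; equivalently $\beta=\tfrac{1+\alpha}{2}\in(0,1)$, so this half needs only to be recalled. The substantive half is the lower bound $\alpha>\tfrac12$, and I expect this to come from demanding that the virtual control $\Psi(\tilde\eta)$ in \eqref{virtualc}, and in particular its time derivative $\dot\Psi(\tilde\eta)$ appearing in the $\dot z_2$ equation of \eqref{Kinamtics and dynamics with error1}, remain bounded (indeed continuous) as $\tilde\eta\to 0$. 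Since the backstepping torque $\bar\tau$ must cancel $\dot\Psi(\tilde\eta)$, an unbounded $\dot\Psi$ would produce an unbounded, hence non-implementable, control input near the equilibrium — exactly the concern flagged in the sentence defining \eqref{virtualc}.

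First I would write $\Psi(\tilde\eta)=\bar Q^{T}\bigl(\dot\eta_d - K_\eta\tilde\eta\,\|\tilde\eta\|^{-(1-\alpha)}\bigr)$ and differentiate in time. The benign terms are $\dot{\bar Q}^{T}(\cdot)$ and $\bar Q^{T}\ddot\eta_d$, which are bounded by Assumption~\ref{assum1} and by $\dot\theta$ being a state. The delicate term is the time derivative of $g(\tilde\eta)\triangleq \|\tilde\eta\|^{-(1-\alpha)}\tilde\eta$. Using $\dot{\tilde\eta}=Q\nu-\dot\eta_d$, a direct computation gives a contribution of the schematic form
\begin{equation*}
	\frac{d}{dt}\!\left(\frac{\tilde\eta}{\|\tilde\eta\|^{1-\alpha}}\right)
	= \frac{\dot{\tilde\eta}}{\|\tilde\eta\|^{1-\alpha}}
	-(1-\alpha)\,\frac{(\tilde\eta^{T}\dot{\tilde\eta})\,\tilde\eta}{\|\tilde\eta\|^{3-\alpha}},
\end{equation*}
so that $\|\dot g(\tilde\eta)\|$ scales like $\|\tilde\eta\|^{-(1-\alpha)}\|\dot{\tilde\eta}\|$ near the origin. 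The key observation is that $\dot{\tilde\eta}$ itself vanishes at the same rate that $\tilde\eta$ shrinks: along the closed loop, $\dot{\tilde\eta}=-K_\eta\tilde\eta\|\tilde\eta\|^{-(1-\alpha)} + Qz_2$, so $\|\dot{\tilde\eta}\|$ is of order $\|\tilde\eta\|^{\alpha}$ (plus the $z_2$ term, which is itself controlled by the second subsystem and vanishes with the errors). Substituting, $\|\dot g(\tilde\eta)\|$ behaves like $\|\tilde\eta\|^{-(1-\alpha)}\cdot\|\tilde\eta\|^{\alpha}=\|\tilde\eta\|^{2\alpha-1}$. Requiring this to stay bounded (and to tend to $0$, so that $\Psi\in\mathcal C^1$ up to the origin and $\bar\tau$ is continuous) forces $2\alpha-1>0$, i.e. $\alpha>\tfrac12$. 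Combining with $\alpha<1$ from Lemma~\ref{lem1} gives $\tfrac12<\alpha<1$.

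The main obstacle is making the informal "scales like" estimates rigorous without circularity: $\dot{\tilde\eta}$ depends on $z_2$, whose own dynamics involve $\dot\Psi$, so one cannot a priori assume $z_2$ is bounded while bounding $\dot\Psi$. I would resolve this by treating the estimate locally and structurally — bounding $\|\dot\Psi(\tilde\eta)\|$ by a function of $(\tilde\eta,\nu)$ directly from \eqref{virtualc} and $\dot{\tilde\eta}=Q\nu-\dot\eta_d$ (with $\nu=z_2+\Psi$ a genuine state), rather than along closed-loop trajectories, and observing that every occurrence of $\|\tilde\eta\|$ in a negative power is accompanied by at least one factor of $\|\tilde\eta\|^{\alpha}$ coming from the structure of $\Psi$ and its derivative. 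The cleanest route is to show that the map $\tilde\eta\mapsto\Psi(\tilde\eta)$ is $\mathcal C^1$ on all of $\mathcal D$ (including the origin) precisely when $\alpha>\tfrac12$: the Jacobian $\partial\Psi/\partial\tilde\eta$ contains a term proportional to $\|\tilde\eta\|^{-(1-\alpha)}$ times a bounded matrix and a term proportional to $\|\tilde\eta\|^{-(3-\alpha)}\tilde\eta\tilde\eta^{T}=\|\tilde\eta\|^{-(1-\alpha)}\cdot(\text{bounded})$, both of which vanish as $\tilde\eta\to0$ iff $1-\alpha<1$ and, for the derivative to be continuous at $0$ with value matching the outer pieces, one needs the stronger $\alpha>\tfrac12$ once the chain rule multiplies by $\dot{\tilde\eta}=O(\|\tilde\eta\|^{\alpha})$. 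Everything else — boundedness of $\bar Q$, $F(\eta)$, $\dot\eta_d$, $\ddot\eta_d$ — is immediate, so the proof reduces to this single exponent-counting argument.
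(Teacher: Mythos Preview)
Your proposal is correct and follows essentially the same route as the paper: both arguments compute $\dot\Psi(\tilde\eta)$ explicitly, identify the singular terms $\dot{\tilde\eta}/\|\tilde\eta\|^{1-\alpha}$ and $\tilde\eta\tilde\eta^{T}\dot{\tilde\eta}/\|\tilde\eta\|^{3-\alpha}$, substitute the closed-loop expression $\dot{\tilde\eta}=-K_\eta\tilde\eta\,\|\tilde\eta\|^{-(1-\alpha)}+\bar Q z_2$ under the limit $(\tilde\eta,z_2)\to(0,0)$, and read off the exponent condition $2\alpha-1>0$. The paper phrases the limiting step as an application of L'H\^opital's rule to the indeterminate quotients, whereas you do the equivalent order-of-magnitude bookkeeping directly; your version is more explicit about the $\alpha<1$ half (which the paper leaves implicit via Lemma~\ref{lem1}) and about the potential circularity through $z_2$, but the core mechanism is identical.
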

	\begin{proof}
		To see this, consider the derivative of virtual error function, 
		\begin{align}		 \label{tdpsi}
			\dot{\Psi}(\tilde{\eta}) &= \dot{\bar{Q}}^T\left(\dot{\eta}_d - \frac{K_\eta \tilde{\eta}}{\left\|{\tilde{\eta}}\right\|^{1-\alpha} }\right) + \bar{Q}^T\left(\ddot{\eta}_d - \frac{K_\eta \dot{\tilde{\eta}}}{\left\|{\tilde{\eta}}\right\|^{1-\alpha}}\right) \\ \nonumber
			&\quad - \bar{Q}^T\left(\frac{K_\eta \tilde{\eta}\tilde{\eta}^T\dot{\tilde{\eta}}(\alpha-1)}{\left\|{\tilde{\eta}}\right\|^{3-\alpha}}\right).
		\end{align}
		Under the limiting case, $\tilde{\eta} \to 0$ and $z_2 \to 0$ on the function $\dot{\Psi}(\tilde{\eta})$,  L-Hospital's rule is applied on the second term of \eqref{tdpsi}. From this, and \eqref{virtualc} yields the range of $\alpha>0$, while the limiting case applied on the fourth and fifth terms of \eqref{tdpsi} gives the range of $\alpha>\frac{1}{2}$ for ensuring the boundedness of  $\dot{\Psi}(\tilde{\eta})$. Combining these two yields, $\alpha>\frac{1}{2}$ and hence the proof.
	\end{proof}
	\subsection{Stability Analysis}
	Next, we explain the development of proposed feedback torque designed using backstepping principles and the closed loop stability analysis with the controller yielding finite-time global convergence of states to the equilibrium.
	\begin{theorem}\label{theo1}
		Consider the reference trajectory $\eta_d(t)\in \mathcal{D}$ satisfying \textit{Assumption \ref{assum1}}, and the transformed closed loop dynamical model of FWMR \eqref{Kinamtics and dynamics with error1}.  With the feedback torque defined as,
		\begin{equation}\label{actual controller}
			\bar{\tau} = -\bar{Q}\tilde{\eta}-F(\tilde{\eta})(z_2+\Psi(\tilde{\eta}))+\dot\Psi(\tilde{\eta})-\frac{K_zz_2}{\left\|{z_2}\right\|^{(1-\alpha)}},
		\end{equation}
		where, $K_z\in\mathbb{R}^{3\times3}$ is a positive definite matrix and $\dot\Psi(\tilde{\eta})$ is defined in \eqref{tdpsi}, the resulting closed loop system \eqref{Kinamtics and dynamics with error1} is globally finite-time stable around the equilibrium $(\tilde{\eta},z_2)=(0,0)$. Consequently, the states $(\eta,\nu)$ will converge to the desired trajectory $(\eta_d,\nu_d)$.
	\end{theorem}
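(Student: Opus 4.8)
The plan is to close the loop, reduce the coupled error dynamics \eqref{Kinamtics and dynamics with error1} to a form in which the kinematic--dynamic coupling is skew, and then certify global finite-time stability with a composite Lyapunov function using \textit{Lemma \ref{lem1}}, \textit{Lemma \ref{ritz theorem}} and \textit{Lemma \ref{lemma inequality}}.

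First I would substitute the feedback torque \eqref{actual controller} into the $z_2$-equation of \eqref{Kinamtics and dynamics with error1}. By construction the terms $F(\tilde\eta)(z_2+\Psi(\tilde\eta))$ and $\dot\Psi(\tilde\eta)$ are cancelled identically; combined with the fact that the virtual control \eqref{virtualc} was chosen precisely so that substituting $\nu=z_2+\Psi(\tilde\eta)$ collapses the kinematic equation (cf.\ \eqref{virtual}--\eqref{ineq1}), the closed loop reduces to
\begin{equation*}
\dot{\tilde\eta}= Qz_2-\frac{K_\eta\tilde\eta}{\|\tilde\eta\|^{1-\alpha}},\qquad \dot z_2=-\bar Q\tilde\eta-\frac{K_z z_2}{\|z_2\|^{1-\alpha}}.
\end{equation*}
I would then note that this vector field is continuous on $\mathbb R^3\times\mathbb R^3$ (the apparent singularities at $\tilde\eta=0$ and $z_2=0$ are removable because $\alpha>0$), that its only zero is $(\tilde\eta,z_2)=(0,0)$, and that it fails to be Lipschitz there, so solutions exist everywhere and are unique away from the equilibrium --- the well-posedness set-up underlying the finite-time stability criteria of Section \ref{sec:2}. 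Boundedness of the control signal near the origin, in particular of the $\dot\Psi(\tilde\eta)$ term of \eqref{tdpsi} which carries factors $\|\tilde\eta\|^{-(3-\alpha)}$ and $\|\tilde\eta\|^{-(1-\alpha)}$, is exactly what pins the admissible exponent to $\alpha\in(\tfrac12,1)$ through \textit{Lemma \ref{lemma1}}.

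Next I would take the composite candidate $V(\tilde\eta,z_2)=V_1(\tilde\eta)+\tfrac12 z_2^{T}z_2$, with $V_1$ as in \eqref{first Lyapunov funtion}, which is $\mathcal C^{1}$, positive definite and radially unbounded on $\mathbb R^3\times\mathbb R^3$. Differentiating along the reduced dynamics, the cross-coupling terms between the two subsystems cancel --- the $-\bar Q\tilde\eta$ term in \eqref{actual controller} is introduced for exactly this purpose, being the transpose partner of the $z_2$-coefficient in the kinematic error equation --- leaving
\begin{equation*}
\dot V=-\frac{\tilde\eta^{T}K_\eta\tilde\eta}{\|\tilde\eta\|^{1-\alpha}}-\frac{z_2^{T}K_z z_2}{\|z_2\|^{1-\alpha}}.
\end{equation*}
Applying the Rayleigh--Ritz inequality (\textit{Lemma \ref{ritz theorem}}) to each term, exactly as in the chain \eqref{ineq1}, gives $\dot V\le -\lambda_m(K_\eta)(2V_1)^{\beta}-\lambda_m(K_z)(2V_2)^{\beta}$ with $V_2=\tfrac12 z_2^{T}z_2$ and $\beta=\tfrac{1+\alpha}{2}\in(\tfrac34,1)$, hence $\dot V\le -c\,(V_1^{\beta}+V_2^{\beta})$ with $c=2^{\beta}\min\{\lambda_m(K_\eta),\lambda_m(K_z)\}>0$. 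Then \textit{Lemma \ref{lemma inequality}} with exponent $\beta\in(0,1)$ gives $V_1^{\beta}+V_2^{\beta}\ge(V_1+V_2)^{\beta}=V^{\beta}$, so $\dot V+cV^{\beta}\le 0$ on all of $\mathbb R^3\times\mathbb R^3$. Invoking \textit{Lemma \ref{lem1}} with exponent $\beta$, the equilibrium $(\tilde\eta,z_2)=(0,0)$ is finite-time stable, and globally so since $V$ is radially unbounded; the settling time satisfies $\mathcal T\le V\big(\tilde\eta(0),z_2(0)\big)^{1-\beta}/\big(c(1-\beta)\big)$. Finally $\tilde\eta\to0$ gives $\eta\to\eta_d$, while $z_2\to0$ together with $\Psi(\tilde\eta)\to\nu_d$ as $\tilde\eta\to0$ (its correction term vanishing because $\alpha>0$) yields $\nu\to\nu_d$.

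The step I expect to be the main obstacle is not the Lyapunov estimate --- which is essentially forced once the cancellations are identified --- but the analysis surrounding the non-Lipschitz, quasi-homogeneous nonlinearities: one must respect the admissible exponent range from \textit{Lemma \ref{lemma1}}, verify that the removable singularities of the reduced vector field produce neither spurious equilibria nor loss of uniqueness off the origin, and confirm that $\dot\Psi(\tilde\eta)$, and hence $\bar\tau$, stays bounded as $(\tilde\eta,z_2)\to(0,0)$ --- which is what underpins the claim that the control law is finite-time continuous.
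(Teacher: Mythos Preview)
Your proposal is correct and follows essentially the same route as the paper: form the composite Lyapunov function $V=V_1+\tfrac12 z_2^{T}z_2$, substitute the feedback \eqref{actual controller} so that the dynamic terms cancel and the $-\bar Q\tilde\eta$ term annihilates the kinematic cross-coupling $\tilde\eta^{T}Qz_2$, bound each remaining quadratic via \textit{Lemma \ref{ritz theorem}}, apply the subadditivity \textit{Lemma \ref{lemma inequality}} to pass from $V_1^{\beta}+V_2^{\beta}$ to $V^{\beta}$, and conclude with \textit{Lemma \ref{lem1}}; your explicit reduction of the closed loop and the well-posedness remarks are a welcome addition but do not change the underlying argument.
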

	\begin{proof}
		To obtain backstepping controller, consider the Lyapunov function for velocity error subsystem $z_2$ as $V_2:\mathbb{R}^3\rightarrow\mathbb{R}$ given by,
		\begin{equation*}
			V_2(z_2)=\frac{1}{2}z_2^Tz_2.
		\end{equation*}
		Taking the time derivative yields,
		\begin{equation} \label{derivative of second lyapunov}
			\dot V_2= {z}^{T}_2\dot {z_2}={z}^{T}_2{({\dot{\nu} -\dot\Psi({\tilde{\eta}}))}}.
		\end{equation}
	Substituting the dynamics \eqref{Kinamtics and dynamics with error1} gives,
		\begin{equation}\label{final dot V_2}
			\begin{aligned}
				\dot V_2 =& {z}^{T}_2{\left(F(\tilde{\eta})(z_2+\Psi({\tilde{\eta}}))+\bar{\tau}-\dot\Psi({\tilde\eta})\right)}.
			\end{aligned}
		\end{equation}
		Next we choose a Lyapunov function for the total system, $V:\mathcal{D}\times\mathbb{R}^3\rightarrow\mathbb{R}$ as $V(\tilde{\eta},z_2)=V_1(\tilde{\eta})+V_2(z_2)$. Taking the time derivative yields,
		\begin{equation*}
			\begin{aligned}
				\dot V(\tilde{\eta},z_2) &=\dot V_1(\tilde{\eta})+\dot V_2(z_2), \\
				& =\tilde{\eta}^T(\bar{Q}(z_2+\Psi(\tilde{\eta}))-\dot\eta_d)  \\ & \ \  \ +{z}^{T}_2{\left(F(\tilde{\eta})(z_2+\Psi({\tilde{\eta}}))+\bar{\tau}-\dot\Psi({\tilde\eta})\right)}.
			\end{aligned}
		\end{equation*}
		Choosing the feedback torque,
		\begin{equation}\label{actual control}
			\bar{\tau}\triangleq-\bar{Q}\tilde{\eta}-F(\tilde{\eta})(z_2+\Psi({\eta}))+\dot\Psi(\tilde\eta)-\frac{K_zz_2}{\left\|{z_2}\right\|^{(1-\alpha)}},
		\end{equation}
		and substituting it in \eqref{final dot V_2} yields,
		\begin{equation}
			\dot V_2 = -\frac{{z_2}^{T}K_zz_2}{\left\|{z_2}\right\|^{(1-\alpha)}}.
		\end{equation}
		From \textit{Lemma} \ref{lemma1}, feedback torque \eqref{actual controller} is bounded for the prescribed range of $\alpha$. Suppose, $\lambda_{m}(K_z)$ denotes the minimum eigenvalue of $K_z$. By making use of \textit{Lemma \ref{ritz theorem}}, 
		\begin{equation}
			\begin{aligned}
				\dot V_2 \leq& -{\lambda_{m}(K_z)}\frac{z_2^{T}z_2}{\left\|z_2\right\|^{(1-\alpha)}}
				\leq-{\lambda_{m}(K_z)}\frac{\left\|z_2\right\|^{2}}{\left\|z_2\right\|^{(1-\alpha)}}, \\
				\leq&-{\lambda_{m}(K_z)}{\left\|z_2\right\|^{(1+\alpha)}}
				\leq-{\lambda_{m}(K_z)}\left({z_2}^{T}{z_2}\right)^{\frac{(1+\alpha)}{2}}, \\
				\leq& -{\lambda_{m}(K_z)}\left({2V_2}\right)^{\frac{(1+\alpha)}{2}}
				\leq -{\lambda_{m}(K_z)}2^{\frac{1+\alpha}{2}} \left({V_2}\right)^{\frac{(1+\alpha)}{2}},\\
			\end{aligned}
		\end{equation}
		where, in the last inequality, taking $c_2={\lambda_m(K_z)}2^\beta$ and $\beta=\frac{(1+\alpha)}{2}$ yields,
		\begin{equation} \label{final Dot V_2}
			\dot V_2 \leq -c_2V_2^{\beta}.
		\end{equation}
		Using inequalities \eqref{ineq1} and \eqref{final Dot V_2},
		\begin{equation}
			\dot V(\tilde{\eta},z_2)\leq-c_1V_1^\beta-c_2V_2^\beta\leq-c(V_1^\beta+V_2^\beta),
		\end{equation}
		where, it is straightforward to see that the above inequality will be always satisfied with the choice of $c=\mbox{min}(c_1,c_2)$ for any initial states $(\tilde{\eta}(0),\nu(0))$. Next, we make use of \textit{Lemma \ref{lemma inequality}} which yields,
		$
		\dot V(\tilde{\eta},z_2)\leq-c(V_1^\beta+V_2^\beta)\leq-c(V_1+V_2)^\beta\leq-cV^\beta.
		$
		Therefore, by using \textit{Lemma \ref{lem1}}, the overall system represented by \eqref{Kinamtics and dynamics with error1} with the closed loop feedback torque \eqref{actual control} is globally finite-time stable with the guaranteed convergence time,
		\begin{equation}\label{time}
			\mathcal{T}\leq \frac{V_0^{(1-\beta)}}{c{(1-\beta)}},
		\end{equation}
		where, $V_0\triangleq V(\tilde{\eta}(0),z_2(0))$ is the Lyapunov function evaluated at  time $t=0$. This yields the proof.
	\end{proof}
	
	\begin{corollary} 
		Control law \eqref{actual controller} is continuous and gives finite-time global convergence of the states to the equilibrium for $\alpha\in(\frac{1}{2},1]$.
	\end{corollary}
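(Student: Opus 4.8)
The plan is to dispatch the two assertions --- continuity and finite-time global convergence --- separately, reusing the machinery already in place. On the open range $\alpha\in(\tfrac12,1)$ the convergence claim is nothing new: it is precisely what \textit{Theorem \ref{theo1}} gives, the hypothesis $\alpha\in(\tfrac12,1)$ being the one carved out by \textit{Lemma \ref{lemma1}} together with \textit{Lemma \ref{lem1}} (where $\beta=\tfrac{1+\alpha}{2}<1$). The only new point on the convergence side is the endpoint $\alpha=1$: then $\|\tilde\eta\|^{1-\alpha}=\|z_2\|^{1-\alpha}=1$, so \eqref{actual controller} collapses to the smooth backstepping law $\bar\tau=-\bar Q\tilde\eta-F(\tilde\eta)(z_2+\Psi(\tilde\eta))+\dot\Psi(\tilde\eta)-K_zz_2$, and the Lyapunov computation of \textit{Theorem \ref{theo1}} returns $\dot V\le-cV^{\beta}$ with $\beta=1$, that is $\dot V\le-cV$; hence $V(t)\le V_0e^{-ct}$ and the equilibrium is globally exponentially --- in particular globally asymptotically --- stable. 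So the closed loop converges globally throughout $\alpha\in(\tfrac12,1]$, the convergence being finite-time on the open subinterval and exponential at the boundary.

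For continuity I would argue directly from \eqref{actual controller}, summand by summand, the only delicate issue being the behaviour on the singular set $\{\tilde\eta=0\}\cup\{z_2=0\}$. Everything assembled from $\bar Q$, $F$ and the reference signals is benign: $\bar Q(\eta)$ and $F(\eta)$ are $\mathcal C^{\infty}$ in $\eta=\tilde\eta+\eta_d$, and $\eta_d,\dot\eta_d,\ddot\eta_d$ are continuous by \textit{Assumption \ref{assum1}}. The non-smooth ingredients are the maps $s\mapsto\|s\|^{\alpha-1}s$ occurring in $\Psi(\tilde\eta)$, in $\dot\Psi(\tilde\eta)$, and in the term $-K_zz_2/\|z_2\|^{1-\alpha}$. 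Since $\alpha\in(\tfrac12,1]$ forces $\alpha-1\in(-\tfrac12,0]$, the norm of $\|s\|^{\alpha-1}s$ equals $\|s\|^{\alpha}\to0$ as $s\to0$, so each such map is continuous on $\mathbb R^{3}$ with value $0$ at the origin; this disposes of the $z_2$-term and shows $\Psi(\tilde\eta)\to\bar Q^{T}\dot\eta_d$ continuously as $\tilde\eta\to0$. The remaining question is therefore the continuity of $\dot\Psi(\tilde\eta)$ near the equilibrium.

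This is where the constraint $\alpha>\tfrac12$ enters. Away from $\tilde\eta=0$ the expression \eqref{tdpsi} is a finite sum of products of continuous functions. As $\tilde\eta\to0$ (and, along the closed loop, $z_2\to0$), I would substitute $\dot{\tilde\eta}$ from \eqref{Kinamtics and dynamics with error1} into \eqref{tdpsi} and track the net power of $\|\tilde\eta\|$ in each term: the contributions involving $K_\eta\dot{\tilde\eta}/\|\tilde\eta\|^{1-\alpha}$ are of $0/0$ type and are handled by L'Hospital exactly as in the proof of \textit{Lemma \ref{lemma1}}, while the term carrying $\|\tilde\eta\|^{3-\alpha}$ in the denominator, once the $\|\tilde\eta\|^{-(1-\alpha)}$ piece of $\dot{\tilde\eta}$ is inserted, reduces to a net factor $\|\tilde\eta\|^{\,2\alpha-1}$, which vanishes precisely when $\alpha>\tfrac12$. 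That bookkeeping is the content of \textit{Lemma \ref{lemma1}}, and it guarantees that $\dot\Psi(\tilde\eta)$ --- hence all of $\bar\tau$ in \eqref{actual controller} --- admits a finite, continuous extension to the equilibrium for every $\alpha\in(\tfrac12,1]$. Combined with the convergence statement, this gives the corollary.

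I expect the genuine obstacle to be exactly this last step. The singular contribution coupling $z_2$ to a negative power of $\|\tilde\eta\|$ is not bounded uniformly over the whole state space, so the continuity assertion has to be read along closed-loop trajectories (where $z_2$ is slaved to $\tilde\eta$), and the careful power-counting that pins the threshold at $\alpha=\tfrac12$ cannot be sidestepped; the rest is routine.
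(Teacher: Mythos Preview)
Your argument is internally coherent, but it is built on a different reading of the word ``continuous'' than the paper's own proof. You treat continuity as continuity of $\bar\tau$ in the state $(\tilde\eta,z_2)$ and handle the singular loci $\{\tilde\eta=0\}$ and $\{z_2=0\}$ by power-counting, invoking \textit{Lemma~\ref{lemma1}} for the $\dot\Psi$ term; you then separate the convergence claim into the finite-time regime $\alpha\in(\tfrac12,1)$ (via \textit{Theorem~\ref{theo1}}) and the exponential endpoint $\alpha=1$. The paper, by contrast, reads the corollary as continuity of the closed-loop vector field \emph{with respect to the parameter} $\alpha$: it computes $\partial\bar\tau/\partial\alpha$ explicitly, shows this derivative stays finite as $\alpha\to1$, and concludes that the flow depends continuously on $\alpha$ throughout $(\tfrac12,1]$; the convergence claim is then inherited from \textit{Theorem~\ref{theo1}} by this parameter continuity, with the remark after the proof making clear that $\alpha=1$ is meant to be recovered as the asymptotic subcase of the finite-time family. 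Your route is arguably the more natural interpretation of ``the control law is continuous'' and has the virtue of being explicit about the fact that at $\alpha=1$ one gets exponential rather than finite-time convergence --- a point the paper's own remark acknowledges. The paper's route, on the other hand, is what was actually intended: a parameter-continuity statement that embeds the asymptotic controller as a limiting case, which is why the proof centres on $\partial\bar\tau/\partial\alpha$ rather than on the state singularities you analyse. Your honest caveat that state-continuity only holds along closed-loop trajectories (because of the $z_2/\|\tilde\eta\|^{1-\alpha}$ coupling in $\dot\Psi$) is a genuine observation that the paper's parameter-continuity argument sidesteps entirely.
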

	
	\begin{proof}
		Continuity of the control law \eqref{actual controller} for  $\alpha\in(\frac{1}{2},1)$ can be seen directly from \textit{Lemma} \ref{lemma1}. In order to show that the continuity also holds for $\alpha=1$, evaluate the partial derivative of control input $\bar{\tau}$ \eqref{actual controller} with respect to $\alpha$ and then equate it to the limiting value of $\alpha=1$. Computing the partial derivative of \eqref{actual controller} with respect to $\alpha$ yields,
		
		\begin{align*}
			\frac{\partial \bar{\tau}}{\partial \alpha} &= \frac{\partial}{\partial \alpha} \left( -\bar{Q} \tilde{\eta} - F(\tilde{\eta}) \left( z_2 + \bar{Q}^T \left( \dot{\eta}_d - \frac{K_\eta \tilde{\eta}}{\|\tilde{\eta}\|^{1-\alpha}} \right) \right) \right. \\
			&\quad \left. + \dot{\bar{Q}}^T \left( \dot{\eta}_d - \frac{K_\eta \tilde{\eta}}{\|\tilde{\eta}\|^{1-\alpha}} \right) + \bar{Q}^T \left( \ddot{\eta}_d - \frac{K_\eta \dot{\tilde{\eta}}}{\|\tilde{\eta}\|^{1-\alpha}} \right) \right. \\
			&\quad \left. - \bar{Q}^T \left( \frac{K_\eta \tilde{\eta} \tilde{\eta}^T \dot{\tilde{\eta}} (\alpha - 1)}{\|\tilde{\eta}\|^{3 - \alpha}} \right) - \frac{K_z z_2}{\|z_2\|^{1-\alpha}} \right).
		\end{align*}
		Taking the terms containing $\alpha$ yields,
		\begin{align*}
			\frac{\partial \bar{\tau}}{\partial \alpha} &= \frac{\partial}{\partial \alpha} \left( \left(F(\tilde{\eta})\bar{Q}^T\tilde{\eta}-\dot{\bar{Q}}^T\tilde{\eta}-\bar{Q}^T\dot{\tilde{\eta}}\right)K_\eta \|\tilde{\eta}\|^{\alpha-1}\right. \\ 
			&\left. -\bar{Q}^TK_\eta\tilde{\eta}\tilde{\eta}^T\dot{\tilde{\eta}}(\alpha-1)\|\tilde{\eta}\|^{\alpha-3}-K_z z_2\|z_2\|^{\alpha-1}\right). 
		\end{align*}
		Upon simplification yields,
		\begin{align*}
			\frac{\partial \bar{\tau}}{\partial \alpha} &= K_\eta\|\tilde{\eta}\|^{\alpha-1}\log(\|\tilde{\eta}\|)\left( F(\tilde{\eta})\bar{Q}^T\tilde{\eta}-\dot{\bar{Q}}^T\tilde{\eta}-\bar{Q}^T\dot{\tilde{\eta}}\right)\\  &\quad -\bar{Q}^TK_\eta\tilde{\eta}\tilde{\eta}^T\dot{\tilde{\eta}}\left((\alpha-1)\|\tilde{\eta}\|^{\alpha-3}\log(\|\tilde{\eta}\|)+\|\tilde{\eta}\|^{\alpha-3}\right) \\ &\quad -K_z z_2\|z_2\|^{\alpha-1} \log(\|z_2\|)
		\end{align*}
		Subsequently, evaluating the above derivative for the limit $\alpha \leadsto 1$ yields,
		\begin{align*}
			\lim_{\alpha\to1} \frac{\partial \bar{\tau}}{\partial \alpha} &=K_\eta\log(\|\tilde{\eta}\|)\left(F(\tilde{\eta})\bar{Q}^T\tilde{\eta}-\dot{\bar{Q}}^T\tilde{\eta}-\bar{Q}^T\dot{\tilde{\eta}}\right) \\ & -\bar{Q}^TK_\eta\dot{\tilde{\eta}}-K_z z_2\log(\|z_2\|),
		\end{align*}
		which shows that the control law is continuous in the range $\alpha\in(\frac{1}{2},1)$ and it also holds when $\alpha=1$.  This signifies that the vector field on $\mathcal{D}\times\mathbb{R}^3$ given by the dynamics \eqref{Kinamtics and dynamics with error1} under the influence of control torque \eqref{actual controller} is continuous with respect to the parameter $\alpha$ in the prescribed range $(\frac{1}{2},1]$. Therefore, the flow of closed loop feedback system is continuous with respect to $\alpha\in(\frac{1}{2},1]$. Hence, integrating this with the global finite-time convergence result of \textit{Theorem} \ref{theo1} yields the proof.
	\end{proof}
	\par Note that, $\alpha=1$ represents the asymptotic controller. We analyzed the continuity of $\bar{\tau}$ at $\alpha=1$ to express it within a general framework such that it can be derived as a subcase of the proposed finite-time controller.
	\subsection{Robustness Analysis}
	For the unperturbed system \eqref{Kinamtics and dynamics with error1}, we have proved the finite-time global convergence of system states to the desired equilibrium. For demonstrating the robustness of the proposed controller in the presence of disturbances, we recast the dynamical equation which follows from \eqref{final dynamics} as,
	\begin{equation}\label{Kinamtics and dynamics with error2}
		\begin{aligned}
			\dot {\tilde{\eta}}& =  Q(z_2+\Psi(\tilde{\eta}))-\dot\eta_d,\\  
			\dot z_2&= F(\tilde{\eta})(z_2+\Psi(\tilde{\eta}))-\dot{\Psi}(\tilde{\eta})+\bar{\tau}+\bar{\tau}_d,
		\end{aligned}
	\end{equation} 
	where, $\bar{\tau}_d$ denotes the constant bounded disturbance torque acting on the system. We explicitly provide an expression for the maximum norm bound of the disturbance torque expressed using a desired side of neighbourhood about the equilibrium point that can be tolerated for the system states to yield the finite-time global convergence result elucidated in \textit{Theorem} \ref{theo1}.
	\begin{corollary}
		Consider the closed loop perturbed system \eqref{Kinamtics and dynamics with error2} with the control torque \eqref{actual controller}. Suppose, $\mathcal{N}\subset\mathcal{D}\times\mathbb{R}^3$ be a closed neighbourhood of the origin defined by,
		\begin{equation*}
			\|\tilde{\eta}\|\leq \Delta_{\tilde{\eta}}, \ \mbox{and} \ \|z_2\|\leq \Delta_{z_2},
		\end{equation*}
		where, $\Delta_{\tilde{\eta}}$ and $ \Delta_{z_2}$ are bounded finite positive constants that defines the size of neighbourhood $\mathcal{N}$. If the maximum norm bound of the disturbance torque is bounded by,
		\begin{equation}\label{normbound}
			\|\bar{\tau}_d\| \leq \Upsilon \leq \frac{\lambda_{m}(K_\eta)\Delta_{\tilde{\eta}}^{(1+\alpha)}+ \lambda_{m}(K_{z_2})\Delta_{z_2}^{(1+\alpha)}}{\Delta_{z_2}},
		\end{equation}
		then the closed loop feedback system \eqref{Kinamtics and dynamics with error2} converges to the neighbourhood $\mathcal{N}$.
	\end{corollary}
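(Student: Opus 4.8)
The plan is to reuse the Lyapunov function $V(\tilde{\eta},z_2)=V_1(\tilde{\eta})+V_2(z_2)=\tfrac12\tilde{\eta}^T\tilde{\eta}+\tfrac12 z_2^T z_2$ from Theorem~\ref{theo1} and track how its derivative degrades when the disturbance term $\bar{\tau}_d$ is added to the $z_2$-dynamics. Since the control torque \eqref{actual controller} is unchanged, substituting the perturbed dynamics \eqref{Kinamtics and dynamics with error2} into $\dot V$ produces exactly the same cancellations as in the unperturbed case, leaving the extra term $z_2^T\bar{\tau}_d$. Concretely, first I would write $\dot V = \dot V_1+\dot V_2$ and, following the computation in the proof of Theorem~\ref{theo1} verbatim, obtain
\begin{equation*}
\dot V \leq -\lambda_m(K_\eta)\|\tilde{\eta}\|^{(1+\alpha)} - \lambda_m(K_z)\|z_2\|^{(1+\alpha)} + z_2^T\bar{\tau}_d,
\end{equation*}
using the chain of inequalities \eqref{ineq1} for the $\tilde\eta$-part and the analogous chain from \eqref{final Dot V_2} for the $z_2$-part, but stopping at the $\|\cdot\|^{1+\alpha}$ stage rather than converting to $V^\beta$.

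Next I would bound the cross term by Cauchy--Schwarz, $z_2^T\bar{\tau}_d \leq \|z_2\|\,\|\bar{\tau}_d\| \leq \|z_2\|\,\Upsilon$, and then use the hypothesis that on the neighbourhood $\mathcal{N}$ we have $\|z_2\|\leq\Delta_{z_2}$, so $z_2^T\bar{\tau}_d\leq \Delta_{z_2}\Upsilon$. The claim is then that, on the boundary of $\mathcal{N}$ (where $\|\tilde\eta\|=\Delta_{\tilde\eta}$ or $\|z_2\|=\Delta_{z_2}$), the negative terms dominate: the bound \eqref{normbound}, i.e. $\Upsilon\,\Delta_{z_2}\leq \lambda_m(K_\eta)\Delta_{\tilde{\eta}}^{(1+\alpha)}+\lambda_m(K_z)\Delta_{z_2}^{(1+\alpha)}$, is precisely the condition that makes $\dot V\leq 0$ on $\partial\mathcal{N}$, so trajectories cannot leave $\mathcal{N}$ and in fact are driven into it. (Here I am reading $K_{z_2}$ in \eqref{normbound} as the matrix $K_z$ of the theorem.) Invoking the standard invariance/ultimate-boundedness argument — $\mathcal{N}$ is a sublevel-type region on whose boundary $\dot V$ is nonpositive — gives convergence of $(\tilde\eta,z_2)$ to $\mathcal{N}$, and one can additionally note that outside $\mathcal{N}$ the strict negativity of $\dot V$ combined with Lemma~\ref{lemma inequality} still yields a finite-time-type estimate for the reaching phase, matching the spirit of Theorem~\ref{theo1}.

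The main obstacle I anticipate is making the geometry of $\mathcal{N}$ and the ``convergence to $\mathcal{N}$'' conclusion precise. The set $\mathcal{N}$ defined by the two separate norm bounds $\|\tilde\eta\|\leq\Delta_{\tilde\eta}$, $\|z_2\|\leq\Delta_{z_2}$ is a product of balls, not a sublevel set of $V$, so the elementary ``$\dot V\le 0$ on the boundary $\Rightarrow$ forward invariant'' statement needs a small amount of care: one should check the sign of $\dot V$ on each of the two pieces of $\partial\mathcal{N}$ separately, using that on the face $\|\tilde\eta\|=\Delta_{\tilde\eta}$ the term $-\lambda_m(K_\eta)\Delta_{\tilde\eta}^{(1+\alpha)}$ already exceeds $\Delta_{z_2}\Upsilon$ by \eqref{normbound} (discarding the other nonpositive term), and similarly on the face $\|z_2\|=\Delta_{z_2}$. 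A cleaner route, which I would adopt if the product-set argument feels awkward, is to replace $\mathcal{N}$ by the sublevel set $\{V\leq \tfrac12(\Delta_{\tilde\eta}^2+\Delta_{z_2}^2)\}$ (or the slightly smaller inscribed one), show $\dot V<0$ outside it using the same inequality, and conclude ultimate boundedness into that sublevel set, which is contained in a controlled enlargement of $\mathcal{N}$. Either way, the analytic content is entirely the one-line estimate above; the remaining work is just the bookkeeping of which neighbourhood one ends up in and stating the invariance argument cleanly.
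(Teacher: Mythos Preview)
Your approach is essentially identical to the paper's: the same Lyapunov function $V=V_1+V_2$, the same cancellations leading to $\dot V \le -\lambda_m(K_\eta)\|\tilde\eta\|^{1+\alpha}-\lambda_m(K_z)\|z_2\|^{1+\alpha}+z_2^T\bar\tau_d$, the same Cauchy--Schwarz bound $z_2^T\bar\tau_d\le\Delta_{z_2}\Upsilon$, and the same conclusion that \eqref{normbound} is exactly the condition making $\dot V\le 0$ on the boundary of $\mathcal{N}$. The paper's proof stops there, with the one-line invariance claim you also isolate; your additional discussion of the product-of-balls geometry versus a sublevel set of $V$ is a legitimate refinement that the paper simply does not address (it substitutes both $\|\tilde\eta\|=\Delta_{\tilde\eta}$ and $\|z_2\|=\Delta_{z_2}$ simultaneously without comment), so on that point you are being more careful than the original.
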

	\begin{proof}
		The proof follows from the Lyapunov stability analysis discussed in 
		\textit{Theorem} \ref{theo1}. Taking the time derivative of $\dot V(\tilde{\eta},z_2) =\dot V_1(\tilde{\eta})+\dot V_2(z_2)$ along the perturbed dynamics \eqref{Kinamtics and dynamics with error2} under the influence of control torque \eqref{actual controller} yields,
		\begin{equation*}
			\dot V(\tilde{\eta},z_2) =-\frac{\tilde{\eta}^{T}K_\eta\tilde{\eta}}{\left\|\tilde\eta\right\|^{(1-\alpha)}}-\frac{{z_2}^{T}K_zz_2}{\left\|{z_2}\right\|^{(1-\alpha)}}+z_2^T\bar{\tau}_d.
		\end{equation*}
		Compared to the unperturbed dynamical system, the extra term in the time derivative of $V(\tilde{\eta},z_2)$ arises as a consequence of the disturbance torque affecting the system. Taking the upper bound on the extra term containing the disturbance torque yields,
		\begin{equation*}
			z_2^T\bar{\tau}_d \leq \|z_2\| \|\bar{\tau}_d\| \leq \Delta_{z_2}\Upsilon.
		\end{equation*}
		Hence, the upper bound of $\dot V(\tilde{\eta},z_2)$ becomes,
		\begin{equation*}
			\dot V(\tilde{\eta},z_2) \leq-\frac{\tilde{\eta}^{T}K_\eta\tilde{\eta}}{\left\|\tilde\eta\right\|^{(1-\alpha)}}-\frac{{z_2}^{T}K_zz_2}{\left\|{z_2}\right\|^{(1-\alpha)}}+\Delta_{z_2}\Upsilon.
		\end{equation*}
		It can be further expressed as,
		\begin{equation*}
			\dot V(\tilde{\eta},z_2) \leq-\lambda_{m}(K_\eta)\|\tilde\eta\|^{(1+\alpha)}-\lambda_{m}(K_{z_2})\|z_2\|^{(1+\alpha)}+\Delta_{z_2}\Upsilon.
		\end{equation*}
		Consequently, evaluating the upper bound of $\dot V(\tilde{\eta},z_2)$ along the desired size of neighbourhood $\mathcal{N}$ yields,
		\begin{equation*}
			\dot V(\tilde{\eta},z_2) \leq-\lambda_{m}(K_\eta)\Delta_{\tilde{\eta}}^{(1+\alpha)}-\lambda_{m}(K_{z_2})\Delta_{z_2}^{(1+\alpha)}+\Delta_{z_2}\Upsilon.
		\end{equation*}
		Hence, along the boundary of $\mathcal{N}$, the time derivative of Lyapunov function $V(\tilde{\eta},z_2)$ will be negative definite if the condition,
		\begin{equation}\label{cond1}
			-\lambda_{m}(K_\eta)\Delta_{\tilde{\eta}}^{(1+\alpha)}-\lambda_{m}(K_{z_2})\Delta_{z_2}^{(1+\alpha)}+\Delta_{z_2}\Upsilon\leq 0,
		\end{equation}
		is satisfied, which is sufficient to ensure that the trajectories originating from outside $\mathcal{N}$ and in the domain of convergence of the equilibrium will eventually converge to $\mathcal{N}$. Further, equation \eqref{normbound} follows from \eqref{cond1} and hence the proof.
	\end{proof}

	\begin{remark}\label{rem1}
		It is important to note that, within a closed neighborhood of the equilibrium (origin), the desired size of $\mathcal{N}$, as characterized by $\Delta_{\tilde{\eta}}$ and $\Delta_{z_2}$, will be less than unity. According to \eqref{normbound}, in the finite-time scenario where $\alpha \in \left(\frac{1}{2},1\right)$, the maximum disturbance that can be tolerated is greater in comparison with the asymptotic case where $\alpha=1$. This highlights the improved robustness capabilities of the proposed finite-time controller. 
	In the next section, we carry out a case study including tracking the given mobile robot along a straight-line path with a humped obstacle placed along the path in order to illustrate this situation in our simulations. This impediment influences the system as an external disturbance. Therefore, in order for the system to go along the path, it must resist the extra disturbance torque.
	\end{remark}
	
	\section{Simulation, Case studies and Discussions}\label{sec:5}
	In this section, we present the real-time simulation which captures the hardware implementation of FWMR system. For this, we make use of Gazebo-ROS simulation platform. ROS is an open-source middleware employed in the development of robot software. 
	Gazebo is an efficient real-time robotic simulator equipped with a complete toolbox of development libraries and cloud services to enable the simulation of physical designs in more realistic environments with high fidelity sensor streams. In this research, we have used Nexus 4WD Mecanum Simulator \citep{nexus1}: A ROS Repository for the robot description and gazebo simulation of the 4WD mecanum wheel robot.  It contains two packages:  \textit{Nexus 4wd mecanum description} (having URDF model and robot meshes) and $\textit{nexus\_4wd\_mecanuum\_gazebo}$ (having launch files and controller scripts written in python), which we have configured according to our objectives. The package $\textit{nexus\_4wd\_mecanuum\_description}$ contains the launch files which loads the nexus 4WD mecanum robot description parameter and the robot state publisher. Gazebo plugin for simulating the controller and launch files for running the simulation are contained in $\textit{nexus\_4wd\_mecanuum\_gazebo}$. Real-time simulation results are visualized using plot juggler: an application to plot logged data.  Gazebo-ROS architecture encompassing all these are outlined in Figure \ref{fig:ROS_architecture}.  
	\begin{figure}[htpb!]
		\centering
		\includegraphics[scale=0.27]{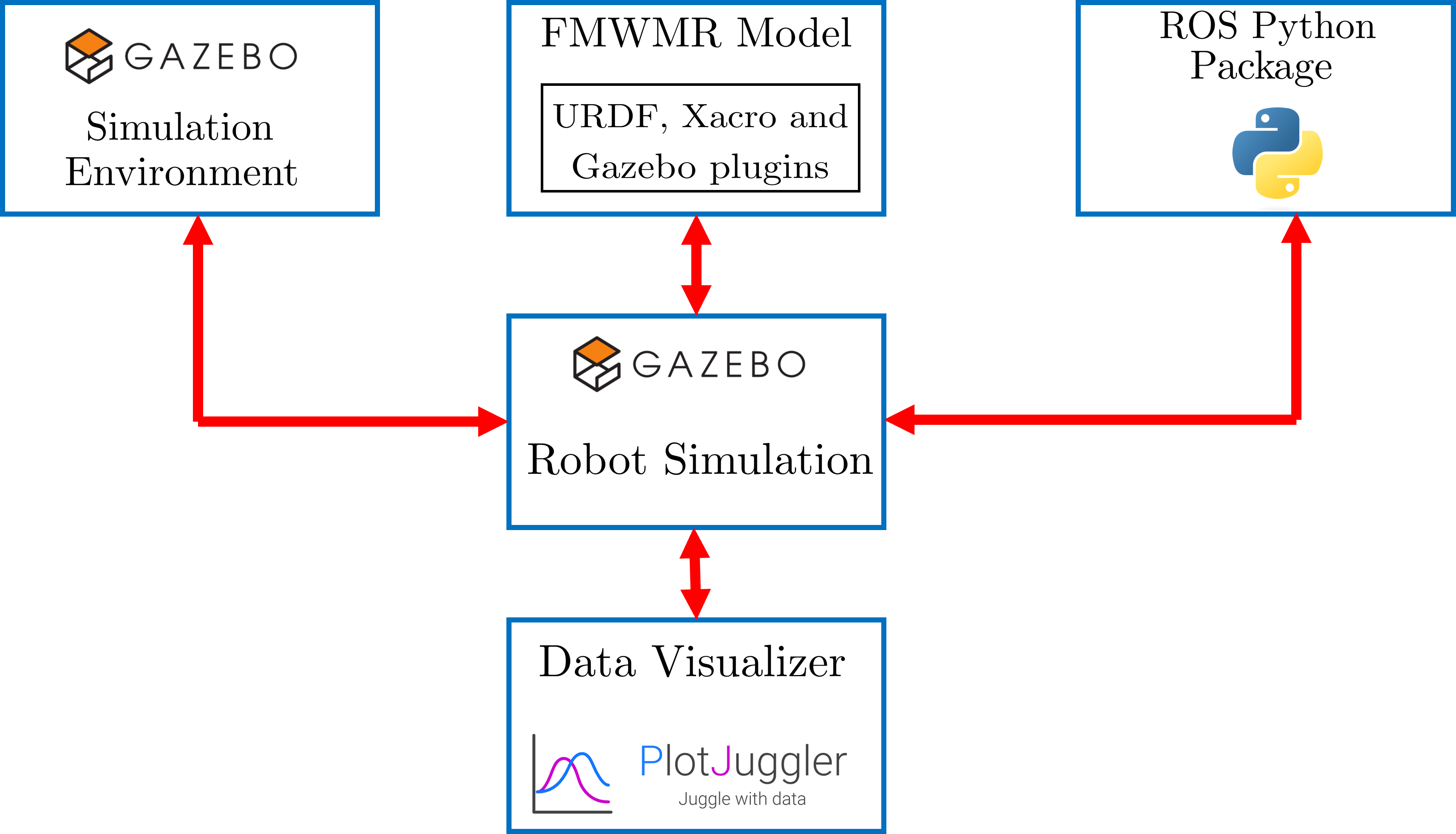}
		\caption{Gazebo-ROS architecture}
		\label{fig:ROS_architecture}
	\end{figure}
	Communication of messages in Gazebo-ROS through the ROS subscriber and publisher nodes are shown in ROS computation graph (Figure \ref{fig:Rqt}) generated through GUI plugin, $\textit{\mbox{rqt\_graph}}$.
	\begin{figure}[htpb!]
		\centering
		\includegraphics[scale=0.06]{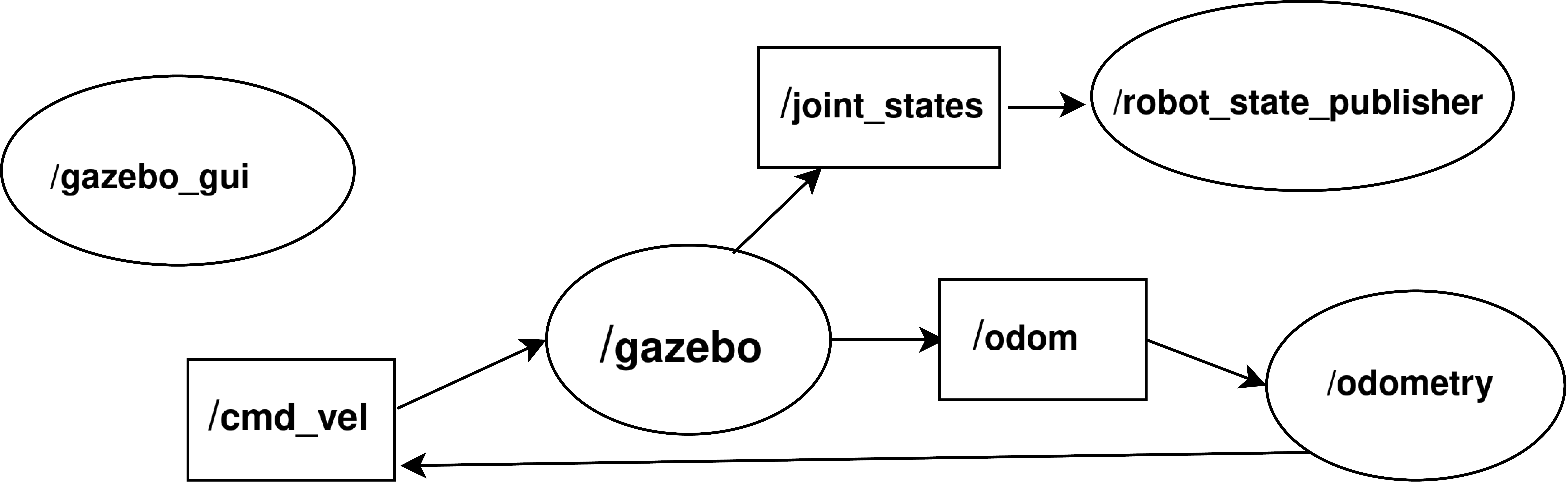}
		\caption{ROS computation graph}
		\label{fig:Rqt}
	\end{figure}
	\begin{figure}[htpb!]
		\centering
		\includegraphics[scale=0.3]{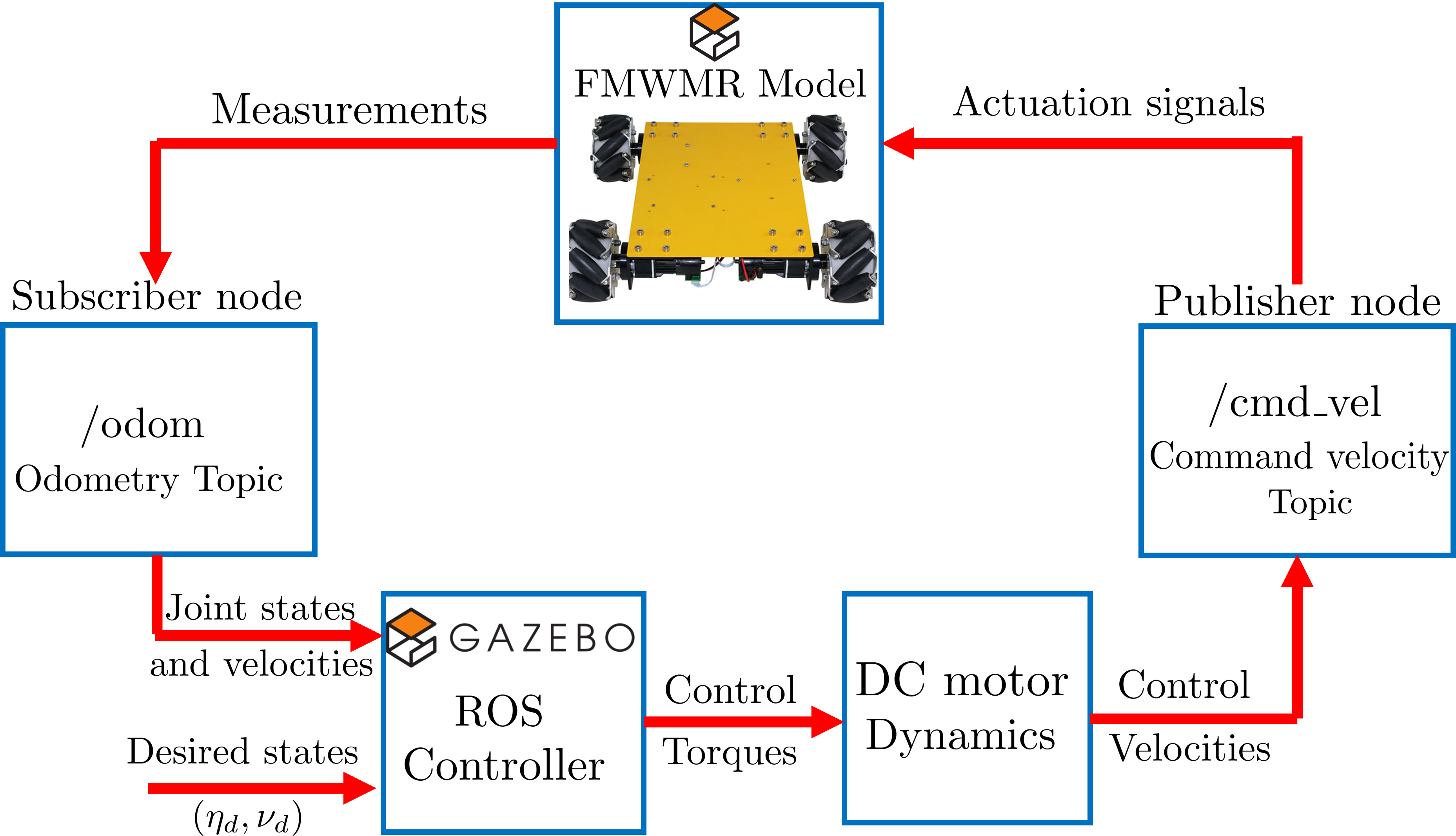}
		\caption{Closed loop control architecture of the system}
		\label{fig:cl}
	\end{figure}
	It has two nodes: a subscriber node, $/\mbox{odom}$ which subscribes to odometry messages from Gazebo plugins to gather vehicle pose and velocity informations and a publisher node, $/\mbox{cmd\_vel}$ which publishes velocity command (translational velocities along $x$ and $y$ directions and yaw rate about $z$-direction) as a twist message which forms the actuation signal to the robot model. The /gazebo\_gui topic is specifically associated with the graphical user interface (GUI) of Gazebo. This topic is  used to communicate information between ROS and the Gazebo GUI. It can be used for various purposes, such as sending commands to control the simulation, receiving sensor data from simulated robots, or updating the visualization of the simulation environment in the GUI. A closed loop control architecture with the designed controller is depicted in Figure \ref{fig:cl}. The subscriber node block receives the measurement signal (vehicle pose and velocities) from Gazebo mobile robot URDF model by subscribing to $/\mbox{odometry}$ topic.  These robot states and velocities along with the desired states are fed to the finite-time backstepping ROS controller block which computes the necessary control torques. These torques are then incorporated to update the position and velocity states in discrete-setting (choosing a sampling time of $0.01$s) approximated using Euler's backward difference scheme. The computed torque signals are transformed to required motor velocities using DC motor dynamics generating the motor control velocities which are then published through command velocity topic, $/\mbox{cmd\_vel}$. These command velocities form the actuation signal for the Gazebo URDF mobile robot model. This completes the closed loop control architecture. 
	\begin{remark}
		Gazebo-ROS provides built-in support for publishing velocities, making it straight forward to interface controller and evaluate it's performance within a ROS ecosystem. The command velocities will be converted into torques by means of a plugin for force-based move. It will generate output torque and forces to be applied to the base link of the robot, using  velocity commands as input. Our future work will involve implementing the developed control strategy on a physical FWMR system where, direct torque commands will be applied for real-world validation.
	\end{remark}
	Now, we present the Gazebo-ROS simulation results with the developed  finite-time backstepping controller \eqref{actual controller}. We present the comparative assessment of finite-time and asymptotic controllers for achieving stabilization and tracking control objectives. Asymptotic feedback law can be obtained by setting $\alpha$=1 in \eqref{actual controller}. Two scenarios are presented: a point stabilization problem and a tracking control problem. 
	\subsection{Case 1: Point stabilization}
	Here, we consider the problem of stabilization of system states and velocities to origin from an arbitrary initial condition. Though this is a sub case of tracking it is not trivial  since we carry out this objective in our simulation for demonstrating few analytical results on the effect of tuning parameters such as control gains and finite-time parameter $\alpha$ on system performance. Initial errors for system states is chosen as: $(\tilde{\eta}(0),z_2(0))=(5,-4,\pi/4,1,0.5,-0.5)$. For these initial conditions, theoretically calculated guaranteed finite-time from \eqref{time} is around $6.4031$s. Figures \ref{fig:POE_stab} to  \ref{fig:POE_stab2} shows the position and orientation stabilization errors. The practical value of average finite-time calculated for pose stabilization ($\vert\tilde{\eta}\vert\leq10^{-6}$) is about $4.24$s for finite-time case. For  asymptotic case, the average convergence time for the same was found to be around $10.03$s. The error in velocities, $z_2$ for subsystem 2 is shown in Figures \ref{fig:VE_stab1} to \ref{fig:VE_stab3}.  Practical value of average convergence time for $\vert z_2\vert\leq10^{-6}$ is obtained around $4.51$s while  for the same convergence criteria, it is obtained around $12.62$s for the asymptotic case. This clearly indicates the faster convergence capabilities of the proposed finite-time strategy in contrast with the conventional asymptotic feedback control law. Figure \ref{fig:u_stab} shows the control inputs (forces and torque) applied to the base link of the robot which becomes zero at steady state once the states converges to the desired values.  
	\begin{table}[h!tbp]
		\centering
		\begin{tabular}{|c|c|c|c|c|c|}
			\hline
			\multirow{2}{*}{Case 1} &  \multicolumn{2}{c|}{Settling Time} & \multicolumn{3}{c|}{TV} \\
			\cline{2-6}
			& $\tilde{\eta}$ & $z_2$ & $\bar{\tau}(1)$ & $\bar{\tau}(2)$ & $\bar{\tau}(3)$ \\
			\hline
			Finite-time & 4.24s & 4.51s & 56.11 & 55.98 & 55.83\\
			Asymptotic & 10.03s& 12.62s & 116.76 & 115.21  & 113.62\\
			\hline
		\end{tabular}
		\caption{Quantitative comparison: Case 1}
		\label{tab:mytable1}
	\end{table}
	A quantitative comparison in terms of convergence time and control effort is provided in Table \ref{tab:mytable1}. We have used Total Variation (TV$=\sum_{k=2}^{N}\vert\bar{\tau}_{k}(i)-\bar{\tau}_{k-1}(i)\vert$, where $i=1$ to $3$ and $N$ is the number of samples) as a measure for quantitative evaluation of control effort. This is indicative of control signal fluctuations which should be as low as possible to prevent actuator wear and tear. TV measure is found to be low for finite-time case compared to the asymptotic case thereby indicating the efficiency of our proposed controller in terms of actuation. 
		\begin{figure}[htpb!]
		\centering
		\includegraphics[scale=0.5]{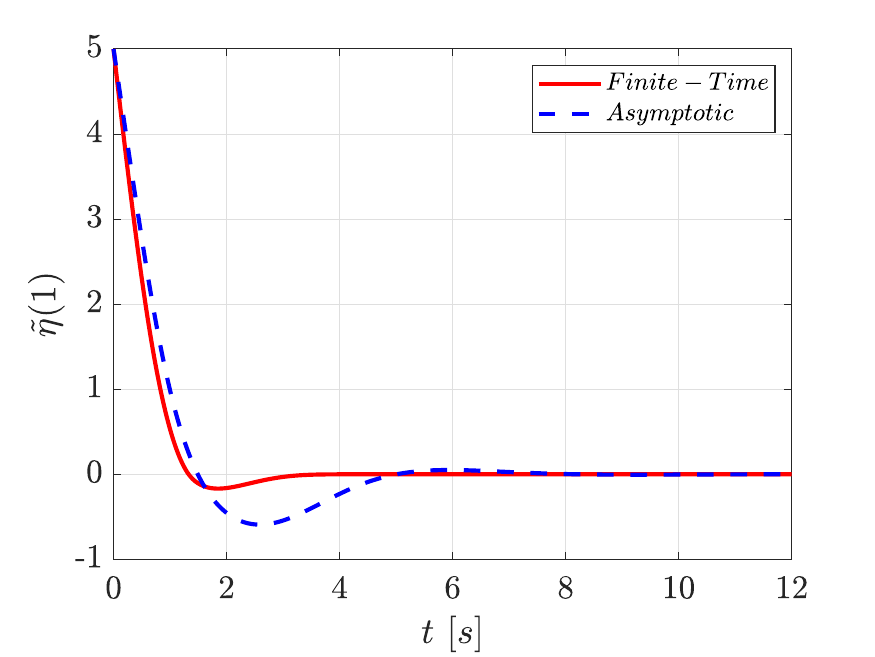}
		\caption{Case 1: Pose error $\tilde{\eta} (1)$ corresponding to $x$-position with time; convergence time for $\vert \tilde{\eta}(1) \vert \leq10^{-6}$ is obtained around 4.35s and 10.2s for finite-time and asymptotic cases respectively.}
		\label{fig:POE_stab}
	\end{figure}
	\begin{figure}[htpb!]
		\centering
		\includegraphics[scale=0.5]{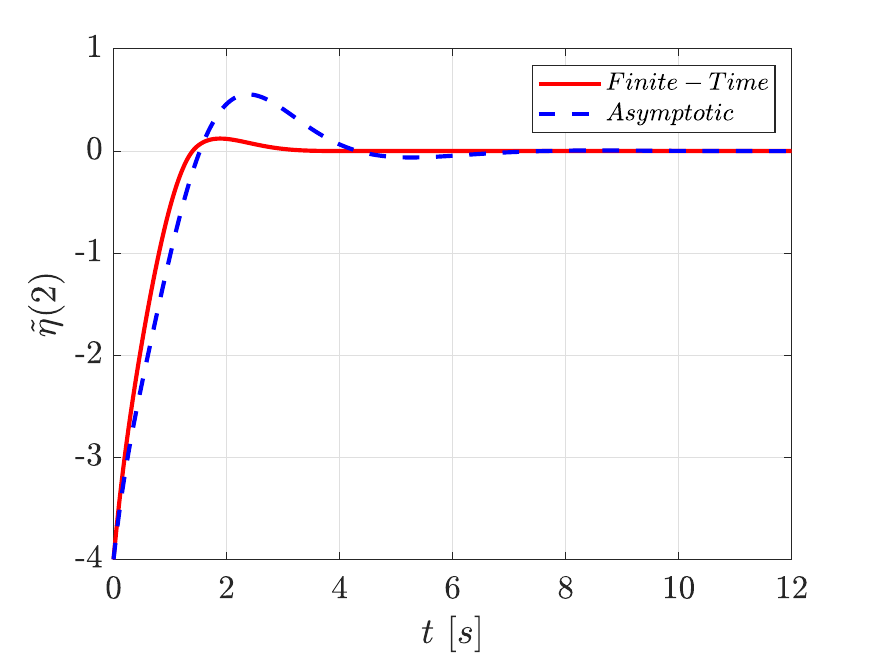}
		\caption{Case 1: Pose error $\tilde{\eta} (2)$ corresponding to $y$-position with time; convergence time for $\vert \tilde{\eta}(2) \vert \leq10^{-6}$ is obtained around 4.24s and 10.1s for finite-time and asymptotic cases respectively.}
		\label{fig:POE_stab1}
	\end{figure}
	\begin{figure}[htpb!]
		\centering
		\includegraphics[scale=0.5]{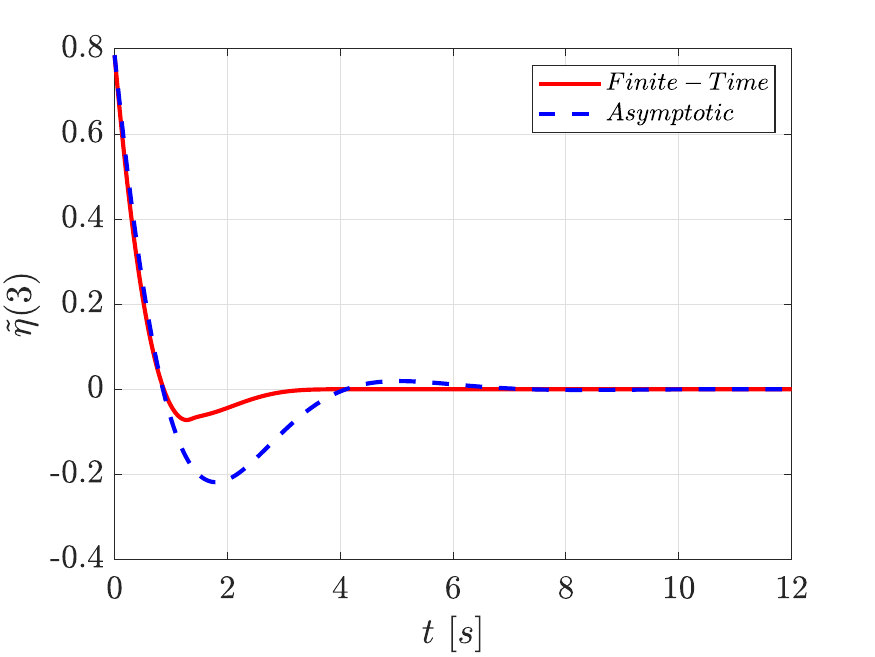}
		\caption{Case 1: Pose error $\tilde{\eta} (3)$ corresponding to yaw angle with time; convergence time for $\vert \tilde{\eta}(3) \vert \leq10^{-6}$ is obtained around 4.12s and 9.8s for finite-time and asymptotic cases respectively.}
		\label{fig:POE_stab2}
	\end{figure}
	\begin{figure}[htpb!]
		\centering
		\includegraphics[scale=0.5]{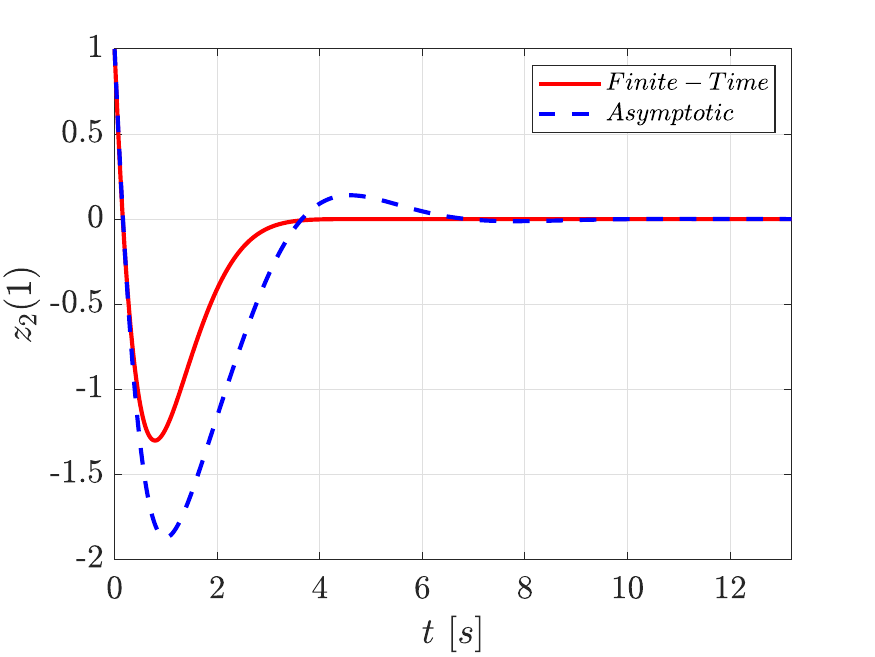}
		\caption{Case 1: Velocity error $z_2(1)$ corresponding to heading velocity with time; convergence time for $\vert z_2(1) \vert \leq10^{-6}$ is obtained around 4.56s and 12.8s for finite-time and asymptotic cases respectively.}
		\label{fig:VE_stab1}
	\end{figure}
	\begin{figure}[htpb!]
		\centering
		\includegraphics[scale=0.5]{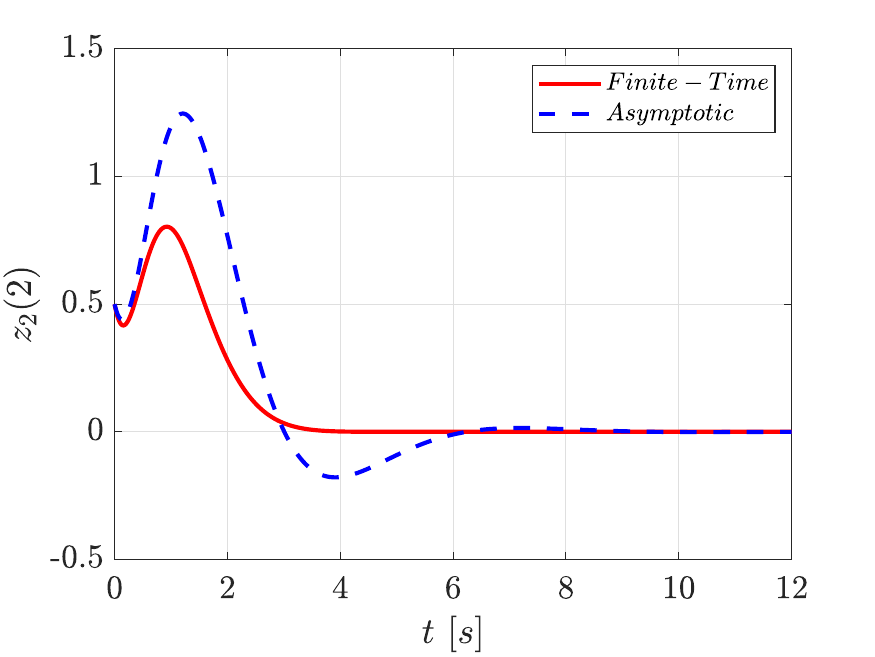}
		\caption{Case 1: Velocity error $z_2(2)$ corresponding to lateral velocity  with time; convergence time for $\vert z_2(2) \vert \leq10^{-6}$ is obtained around 4.6s and 12.65s for finite-time and asymptotic cases respectively.}
		\label{fig:VE_stab2}
	\end{figure}
	\begin{figure}[htpb!]
		\centering
		\includegraphics[scale=0.5]{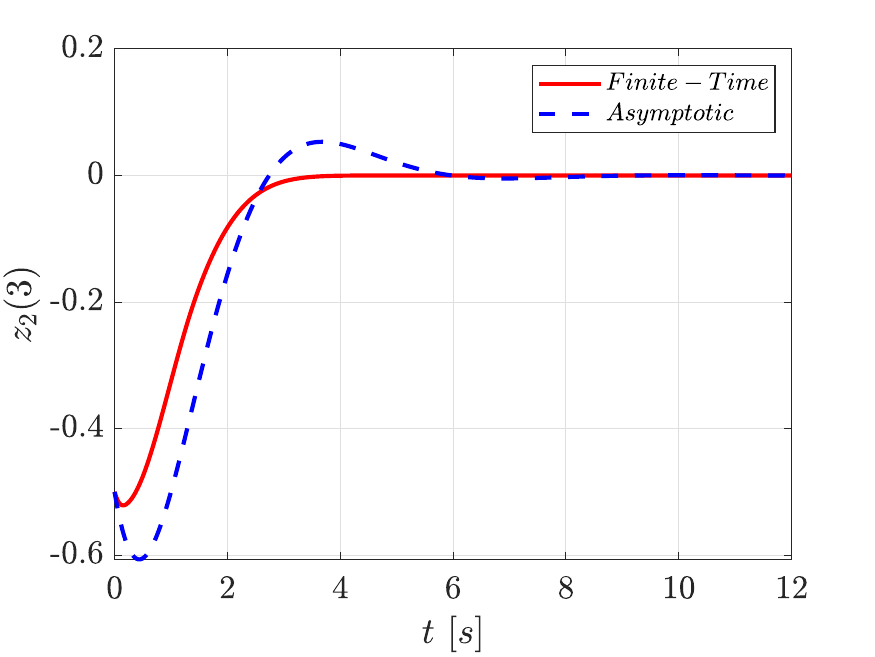}
		\caption{Case 1: Velocity error $z_2(3)$ corresponding to yaw rate with time; convergence time for $\vert z_2(3) \vert \leq10^{-6}$ is obtained around 4.37s and 12.4s for finite-time and asymptotic cases respectively.}
		\label{fig:VE_stab3}
	\end{figure}
	\begin{figure}[htpb!]
		\centering
		\includegraphics[scale=0.55]{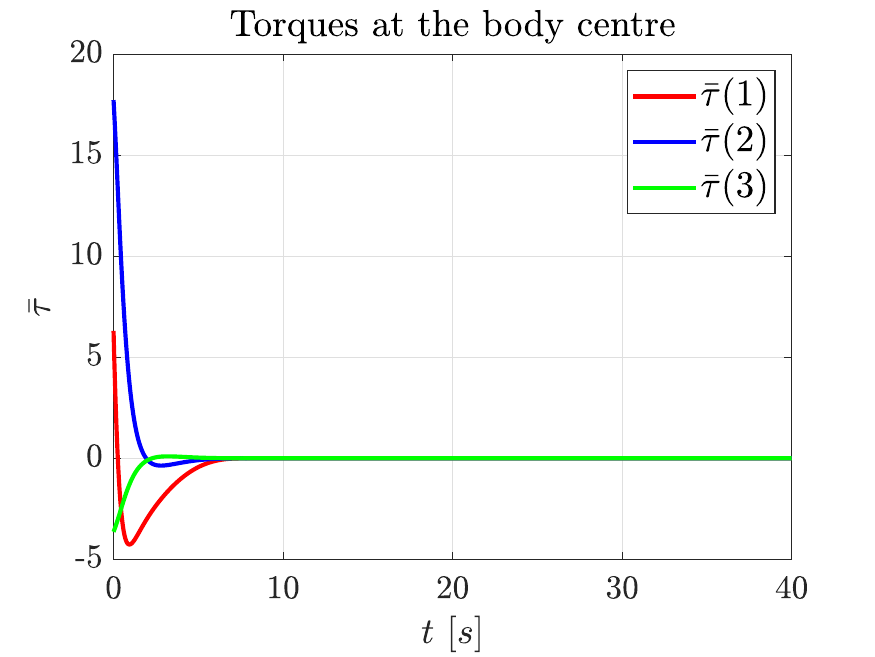}
		\caption{Case 1: Control forces and torque}
		\label{fig:u_stab}
	\end{figure}
	
	\subsubsection{Guidelines for tuning the controller weights $K_\eta$ and $K_{z}$}
	We present the analytical results to examine how varying the tuning parameters $K_\eta$ and $K_{z}$ impacts the system performance measures $\mathcal{T}$ (the finite-time) and TV. Specifically, we characterize the tuning gains $K_\eta$ and $K_{z}$ in terms of the parameters $\lambda_1$ and $\lambda_2$ as follows: $K_\eta = \text{diag}(\lambda_1, \lambda_1, 0.5\lambda_1)$ and $K_z = \text{diag}(\lambda_2, \lambda_2, 0.5\lambda_2)$. Figure \ref{fig:3dmesh} illustrates a surface plot showing how $\mathcal{T}$ and TV vary with changes in $\lambda_1$ and $\lambda_2$. The plot reveals that increasing the tuning weights results in a decrease in convergence time $\mathcal{T}$, but leads to a rise in TV due to large deviation in $\dot{\Psi}(\tilde{\eta})$. This indicates a trade-off between these two performance metrics as tuning weights are adjusted. Figure \ref{tradeoff} further highlights this trade-off. Therefore, the selection of tuning weights should be made with careful consideration of this balance.
		\begin{figure}[htpb!]
		\centering
		\includegraphics[scale=0.33]{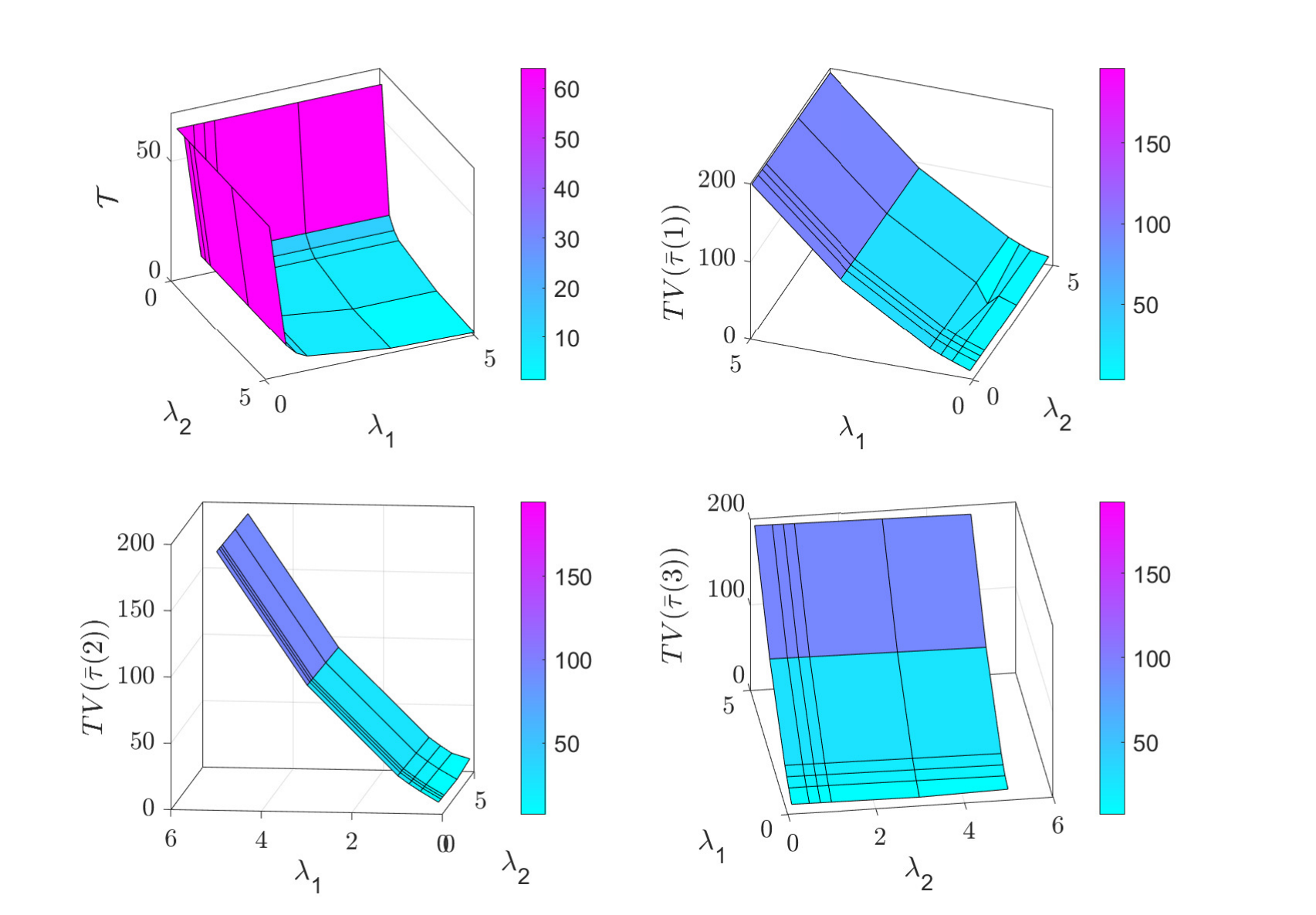}
		\caption{ Variation of $\mathcal{T}$ and TV with the tuning parameters $\lambda_1$ and $\lambda_2$; Here, $\lambda_1$ and $\lambda_2$ are chosen as common parameters which characterizes the weights $K_\eta$ and $K_{z}$ as: $K_\eta = \text{diag}(\lambda_1, \lambda_1, 0.5\lambda_1)$ and $K_z = \text{diag}(\lambda_2, \lambda_2, 0.5\lambda_2)$}
		\label{fig:3dmesh}
	\end{figure}
	\begin{figure}[htpb!]
		\centering                                                                  	
		\includegraphics[scale=0.55]{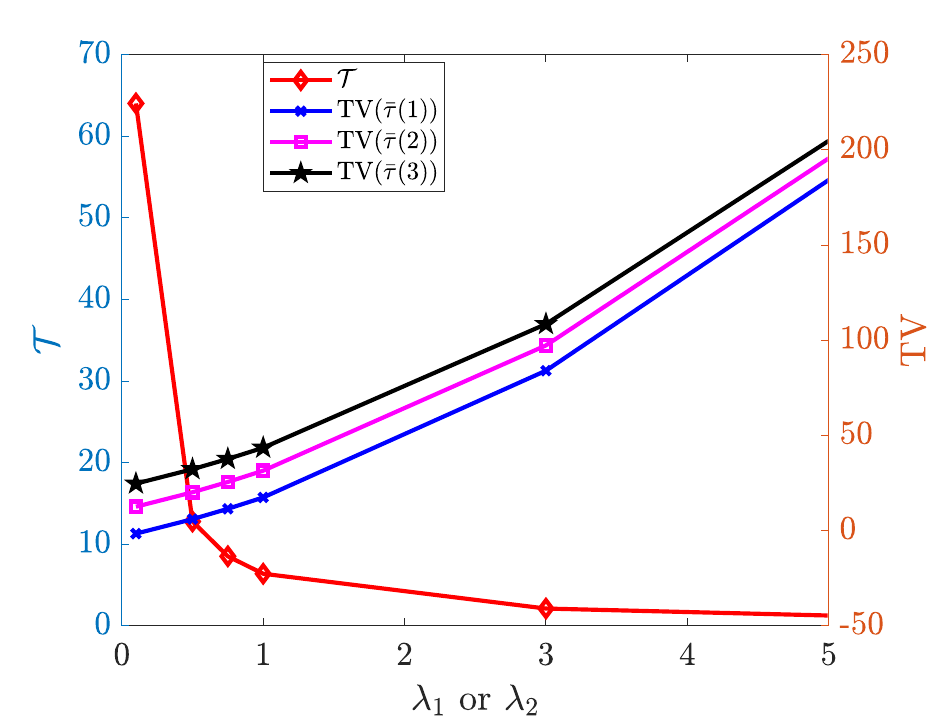}
		\caption{Factors affecting the choice of controller gains $K_{\eta}$ and $K_{z}$; It clearly shows the trade-off between the performance metrics $\mathcal{T}$ and TV as the tuning weights are adjusted}
		\label{tradeoff}
	\end{figure}
	\subsubsection{Factors governing the choice of finite-time parameter $\alpha$}
	The parameter $\alpha$ has a significant impact on finite time $\mathcal{T}$, total variation (TV), and robustness to external disturbances. To illustrate this analysis, we perform various comparisons for case 1, which focuses on point stabilization. However, this approach is applicable to any scenario, whether it's tracking or stabilization. Figure \ref{fig:alphaTV} illustrates how TV varies with $\alpha$ over the range from 0.5 to 1. 
	\begin{figure}[h!tpb]
		\centering
		\includegraphics[scale=0.5]{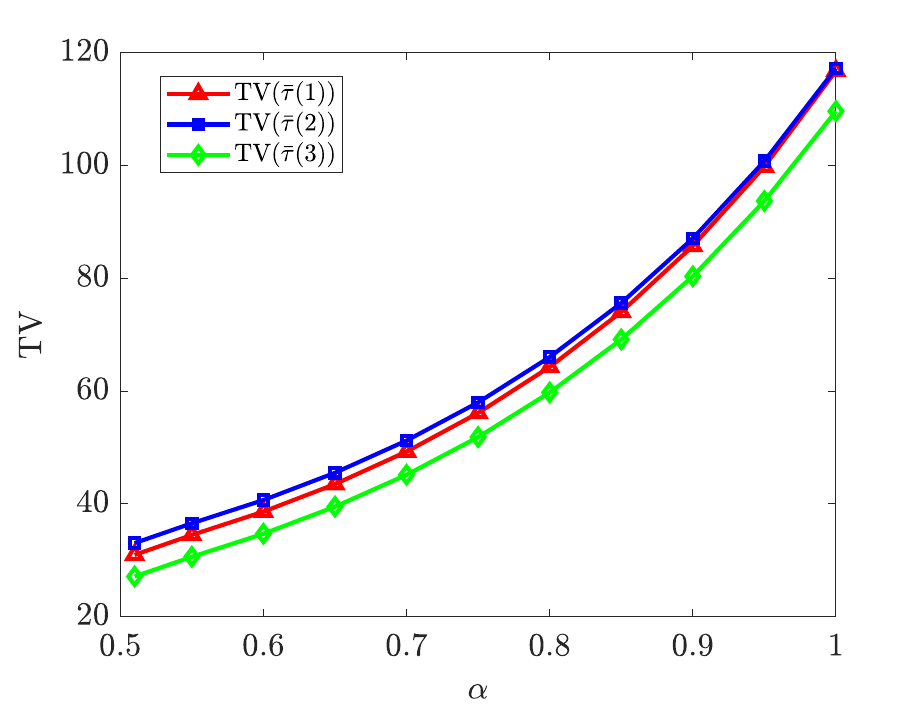}
		\caption{Plot showing the effect of $\alpha$ on TV; Illustrates how TV varies with $\alpha$ over the range from 0.5 to 1}
		\label{fig:alphaTV}
	\end{figure}
	\begin{figure}[h!tpb]
	\centering
	\includegraphics[scale=0.5]{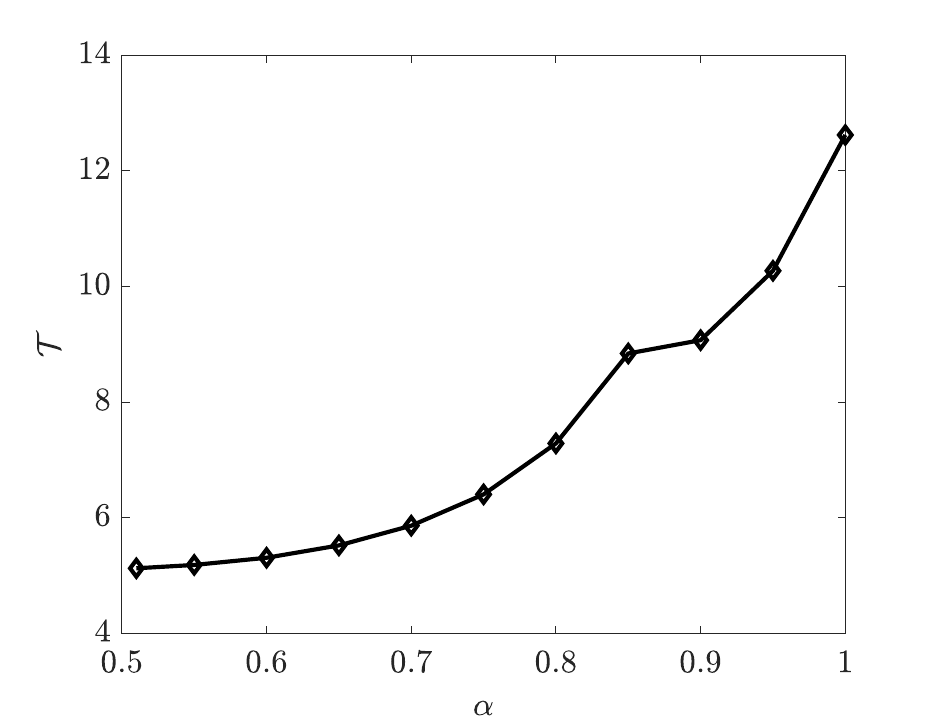}
	\caption{Plot showing the effect of $\alpha$ on $\mathcal{T}$}
	\label{fig:alphaT}
\end{figure}
	Figure \ref{fig:alphaT} shows the relationship between $\mathcal{T}$ and $\alpha$. The increasing trend in the graph indicates that, compared to the asymptotic case where $\alpha=1$, a finite-time controller offers faster convergence. The graph also demonstrates that TV increases with $\alpha$, but it is notably lower for finite-time controllers compared to the asymptotic case. This suggests that the finite-time controller provides better actuation capabilities.	
Additionally, Figure \ref{fig:robust2} presents the trend of TV and the maximum normalized external disturbance torque bound with varying $\alpha$. Lower values of $\alpha$ enhance robustness while reducing the controller effort. Thus, the proposed finite-time controller offer substantial advantages in these aspects. However, in practice, values of $\alpha$ close to 0.5 may cause flickering in the control torque as elucidated by \textit{Lemma} \ref{lemma1} which will reflect as chattering in the velocity plots. Therefore, the choice of $\alpha$ should balance these factors. Based on several analytical simulations, a recommended range for $\alpha$ is between 0.6 to 0.8.

	\begin{figure}[h!tpb]
		\centering
		\includegraphics[scale=0.57]{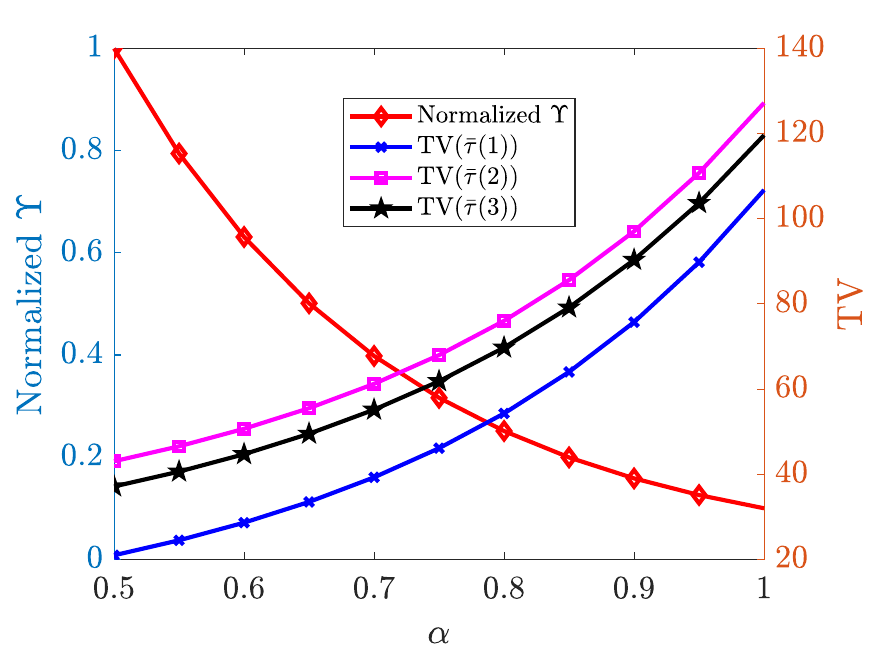}
		\caption{Plot showing the trend of TV and the maximum normalized disturbance torque bound upon varying $\alpha$}
		\label{fig:robust2}
	\end{figure}
	 
\begin{figure*}[h!tpb]
	\centering
	\includegraphics[scale=0.125]{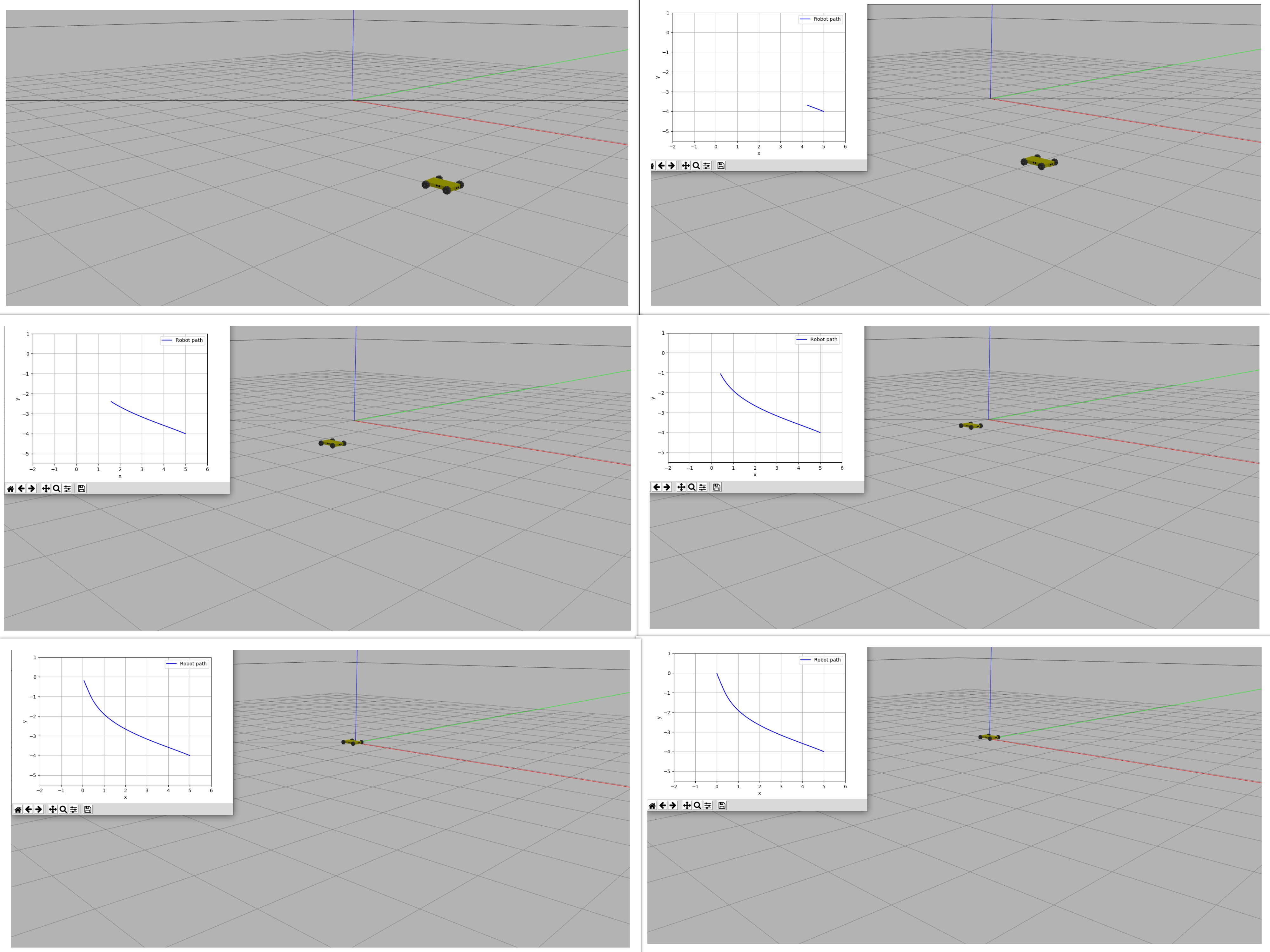}
	\caption{Few snapshots illustrating the stabilization objective of FWMR in Gazebo-ROS considering some initial condition for the system states; Sequence of images progresses from the top left to the bottom right, capturing key moments of the vehicle path.}
	\label{stsc}
\end{figure*}

	\subsection{Case 2: S-shaped trajectory tracking}
This case is motivated by motion planning application, to avoid collision with the walls by passing through narrow passages in warehouses as depicted in Figure \ref{fig:phase_track}. Here, the robot has to navigate and execute a zig-zag motion which can be parameterized by an S-shaped trajectory.  This forms a trajectory tracking control problem. Initial errors for system states and velocities,  $(\tilde{\eta}(0),z_2(0))$=(-1,-2,$\pi$/4,-1,2,1) are chosen for simulation. The theoretically calculated guaranteed finite-time for these choice of initial conditions is approximately $5.2357$ seconds. Figures \ref{fig:POE_stabc} to \ref{fig:POE_stabc2} illustrates the errors in position and orientation tracking. For the finite-time case, the practical average convergence time for pose tracking ($\vert\tilde{\eta}\vert\leq10^{-6}$) is approximately $4.38$ seconds. In contrast, for the asymptotic case, the average convergence time was found to be around $14.5$ seconds. The error in velocities, $z_2$ for subsystem 2 is depicted in Figures \ref{fig:VE_stabc1} to \ref{fig:VE_stabc3}. The practical convergence time for $\vert z_2\vert\leq10^{-6}$ is approximately $4.58$ seconds for the finite-time case and $12.54$ seconds for the asymptotic case. This highlights the superior convergence capabilities of proposed finite-time strategy. 
	Figure \ref{fig:u_track} illustrates the actual control inputs (forces and torque) applied to the body center of the robot, which is non-zero at steady state since it is a tracking control problem. The S-shaped trajectory tracking plot in the $XY$ plane for finite-time case along with the desired trajectory is shown in Figure \ref{fig:phase_track}. 
		\begin{table}
		\begin{tabular}{|c|c|c|c|c|c|}
			\hline
			\multirow{2}{*}{Case 2} &  \multicolumn{2}{c|}{Settling Time} & \multicolumn{3}{c|}{TV} \\
			\cline{2-6}
			& $\tilde{\eta}$ & $z_2$ & $\bar{\tau}(1)$ & $\bar{\tau}(2)$ & $\bar{\tau}(3)$ \\
			\hline
			Finite-time & 4.38s & 4.58s & 11.45 & 14.19 & 2.37\\
			Asymptotic & 14.50s& 12.54s& 30.16 & 36.19  & 11.90\\
			\hline
		\end{tabular}
		\caption{Quantitative comparison: Case 2}
		\label{tab:mytable2}
	\end{table}
	A quantitative comparison in terms of convergence time and control effort is provided in Table \ref{tab:mytable2} where, it can be inferred that the proposed finite-time backstepping based feedback control law yields lesser settling time as well as control signal fluctuations compared with the conventional asymptotic feedback control scheme. Real-time simulation results for stabilization and tracking we have performed in Gazebo-ROS is available for reference  \href{https://docs.google.com/presentation/d/1cuLrpexPpeugGZt6EsnN4Xlp_H-SjbU2rH4_Fab523I/edit?usp=sharing}{in this link}. We have also provided few snapshots of tracking the trajectory in Gazebo-ROS which is shown in Figure \ref{Sgazebo}.
	
	\subsection{Case 3: Robustness to external perturbations}
	In this case, we demonstrate the robustness capabilities of the proposed finite-time controller subjected to external disturbances, We conducted a case study involving the tracking of a mobile robot along a straight-line path, with a humped obstacle positioned along the route to simulate this scenario in our simulations. This obstacle acts as an external disturbance, necessitating that the system compensates for the additional torque in order to maintain its trajectory along the path. This scenario is depicted in Figure \ref{obs}.
In this scenario, the goal is to travel along a straight line on the \( x \)-axis while encountering an obstacle. Consequently, the vehicle must provide additional torque to surmount the obstacle and continue moving forward. This requirement arises from compensating for external disturbances, which are consistently present. The maximum tolerable level of these disturbances depends on both transient and control weights. Plot of $x$ versus $z$ position is depicted in Figure \ref{XZ}, where the path represents geometry of the obstacle situated along the path. Figure \ref{torqueG} illustrates the control input plot where an increase in control effort is attributed to the obstacle in its path, which necessitates additional input to overcome. Snapshots of the robot maneuver in the presence of an obstacle along it's path is depicted in Figure \ref{Obs}. This result validates remark \ref{rem1}.
A video showcasing the results of Gazebo-ROS simulations conducted for all the three cases is available at the following link: \url{https://youtu.be/F6paTC0Xp3c}.

	\begin{figure}[htpb!]
	\centering
	\includegraphics[scale=0.25]{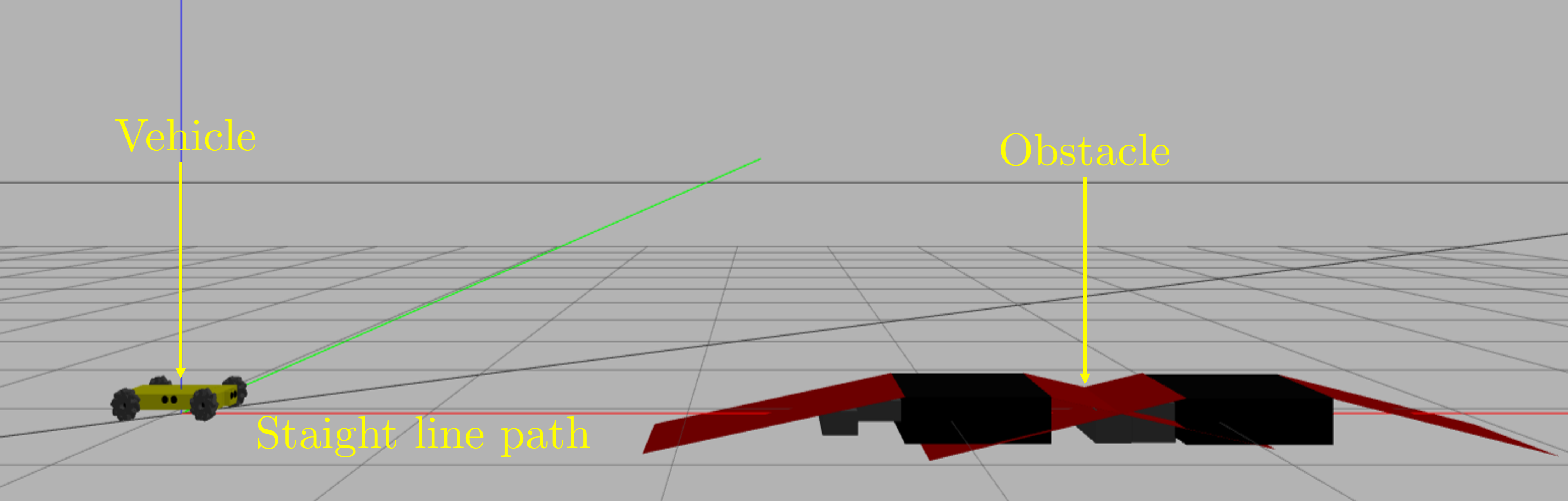}
	\caption{Case 3: Depicts the scenario where vehicle is subjected to external disturbances induced by an obstacle along it's path.}
	\label{obs}
\end{figure}
	\begin{figure}[htpb!]
		\centering
		\includegraphics[scale=0.5]{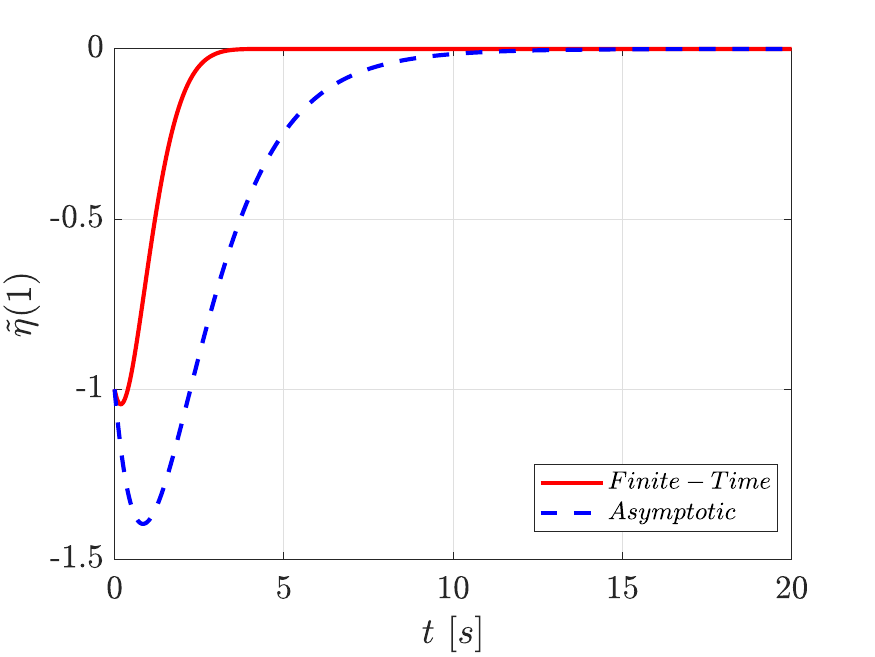}
		\caption{Case 2: Pose tracking error $\tilde{\eta} (1)$ corresponding to $x$-position with time; convergence time for $\vert \tilde{\eta}(1) \vert \leq10^{-6}$ is obtained around 4.42s and 14.45s for finite-time and asymptotic cases respectively.}
		\label{fig:POE_stabc}
	\end{figure}
	\begin{figure}[htpb!]
		\centering
		\includegraphics[scale=0.5]{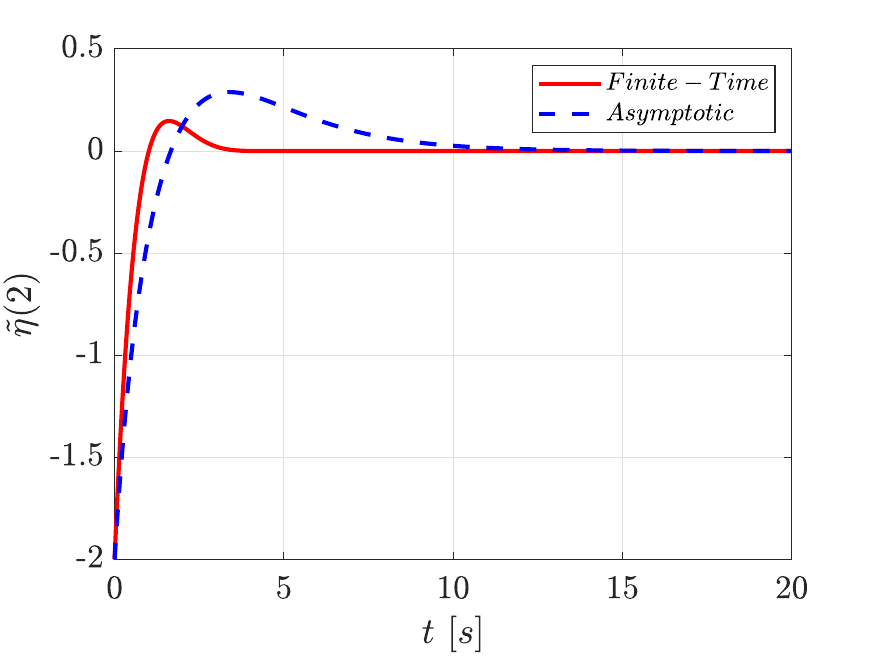}
		\caption{Case 2: Pose tracking error $\tilde{\eta} (2)$ corresponding to $y$-position with time; convergence time for $\vert \tilde{\eta}(2) \vert \leq10^{-6}$ is obtained around 4.56s and 14.84s for finite-time and asymptotic cases respectively.}
		\label{fig:POE_stabc1}
	\end{figure}
	\begin{figure}[htpb!]
		\centering
		\includegraphics[scale=0.5]{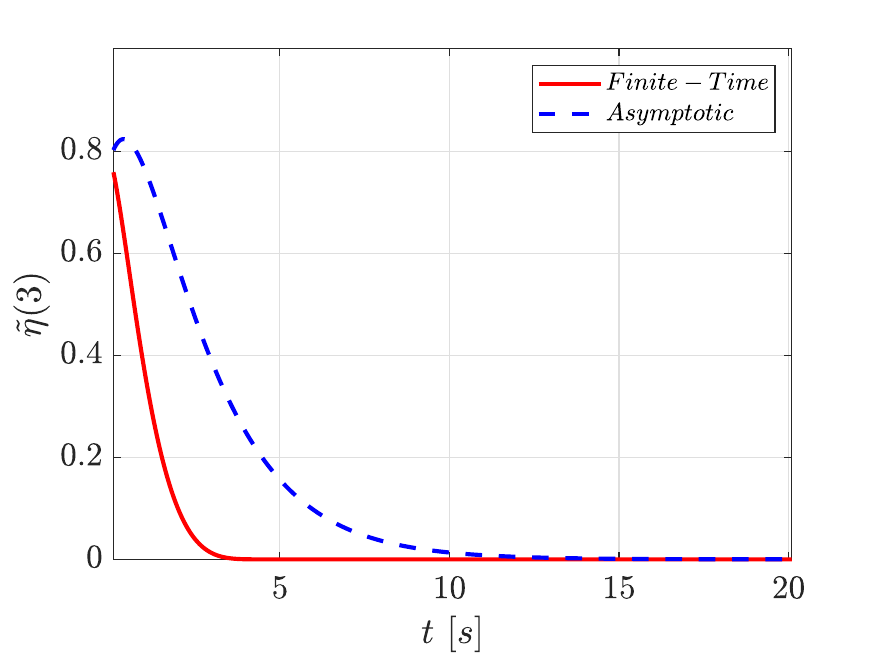}
		\caption{Case 2: Pose error tracking $\tilde{\eta} (3)$ corresponding to yaw angle with time; convergence time for $\vert \tilde{\eta}(3) \vert \leq10^{-6}$ is obtained around 4.17s and 14.2s for finite-time and asymptotic cases respectively.}
		\label{fig:POE_stabc2}
	\end{figure}
	\begin{figure}[htpb!]
		\centering
		\includegraphics[scale=0.5]{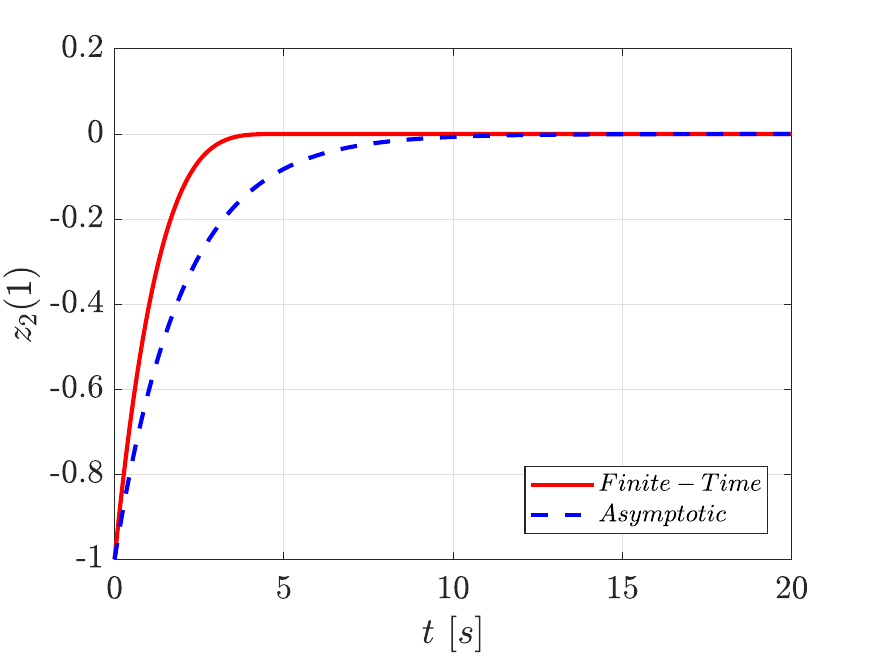}
		\caption{Case 2: Velocity tracking error $z_2(1)$ corresponding to heading velocity with time; convergence time for $\vert z_2(1) \vert \leq10^{-6}$ is obtained around 4.85s and 13.24s for finite-time and asymptotic cases respectively.}
		\label{fig:VE_stabc1}
	\end{figure}
	\begin{figure}[htpb!]
		\centering
		\includegraphics[scale=0.5]{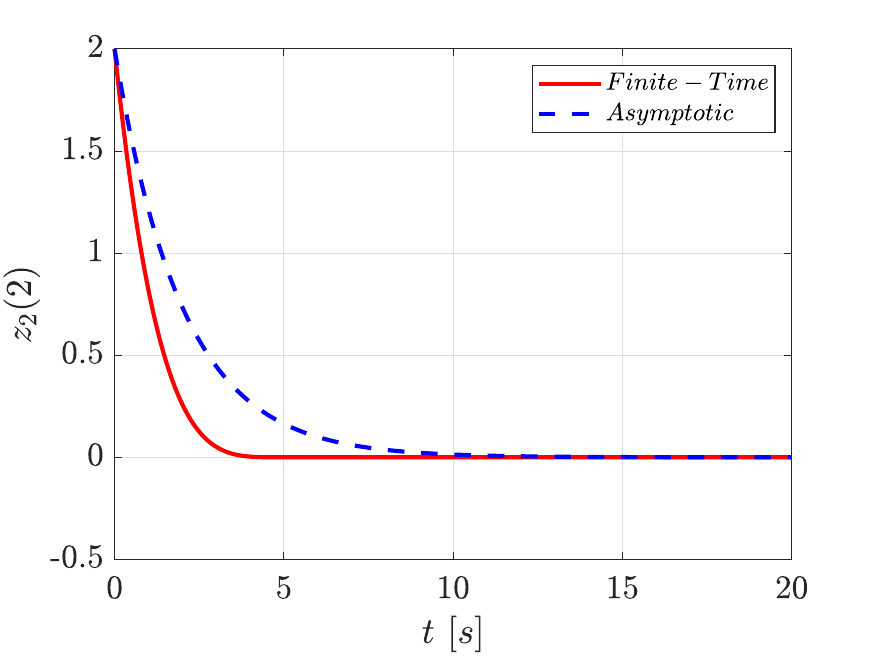}
		\caption{Case 2: Velocity tracking error $z_2(2)$ corresponding to lateral velocity  with time; convergence time for $\vert z_2(2) \vert \leq10^{-6}$ is obtained around 4.73s and 12.85s for finite-time and asymptotic cases respectively.}
		\label{fig:VE_stabc2}
	\end{figure}
	\begin{figure}[htpb!]
		\centering
		\includegraphics[scale=0.5]{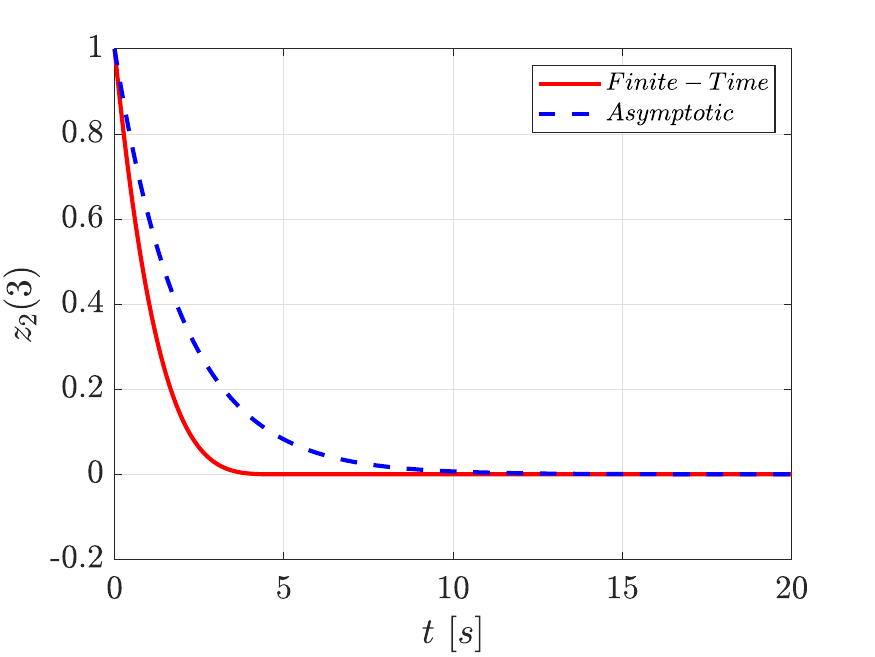}
		\caption{Case 2: Velocity tracking error $z_2(3)$ corresponding to yaw rate with time; convergence time for $\vert z_2(3) \vert \leq10^{-6}$ is obtained around 4.15s and 11.54s for finite-time and asymptotic cases respectively.}
		\label{fig:VE_stabc3}
	\end{figure}

	\begin{figure}[htpb!]
		\centering
		\includegraphics[scale=0.55]{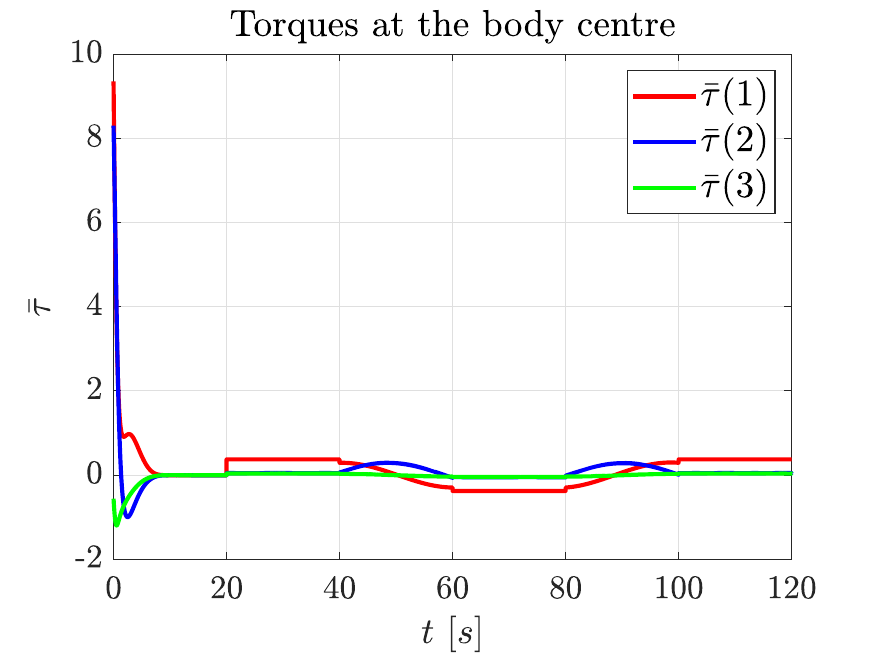}
		\caption{Case 2: Control forces and torque}
		\label{fig:u_track}
	\end{figure}
	\begin{figure}[htpb!]
		\centering
		\includegraphics[scale=0.55]{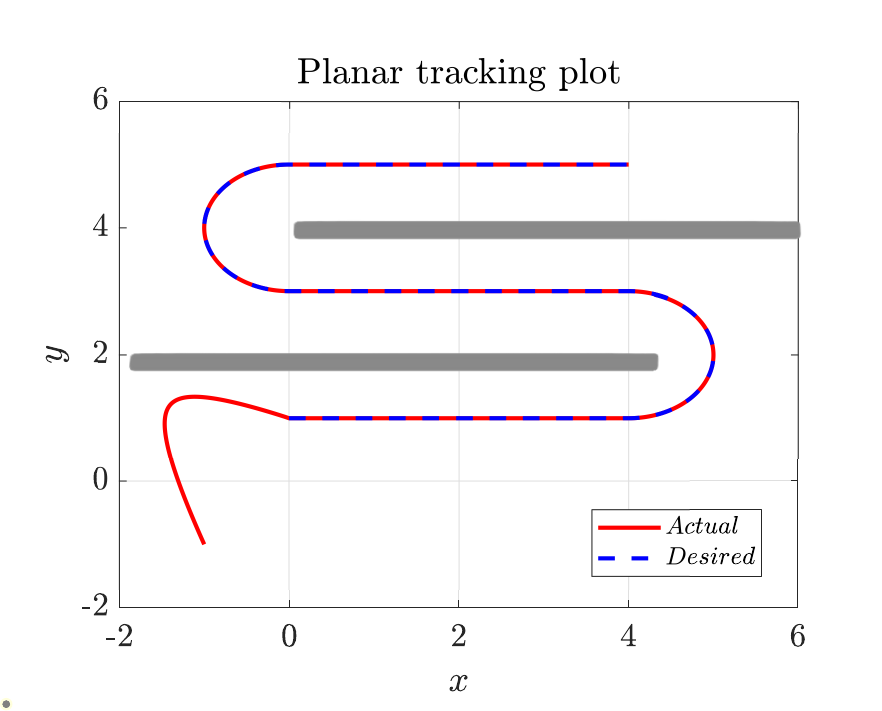}
		\caption{Case 2: Finite-time tracking in $XY$ plane}
		\label{fig:phase_track}
	\end{figure}

\begin{figure}[htpb!]
	\centering
	\includegraphics[scale=0.55]{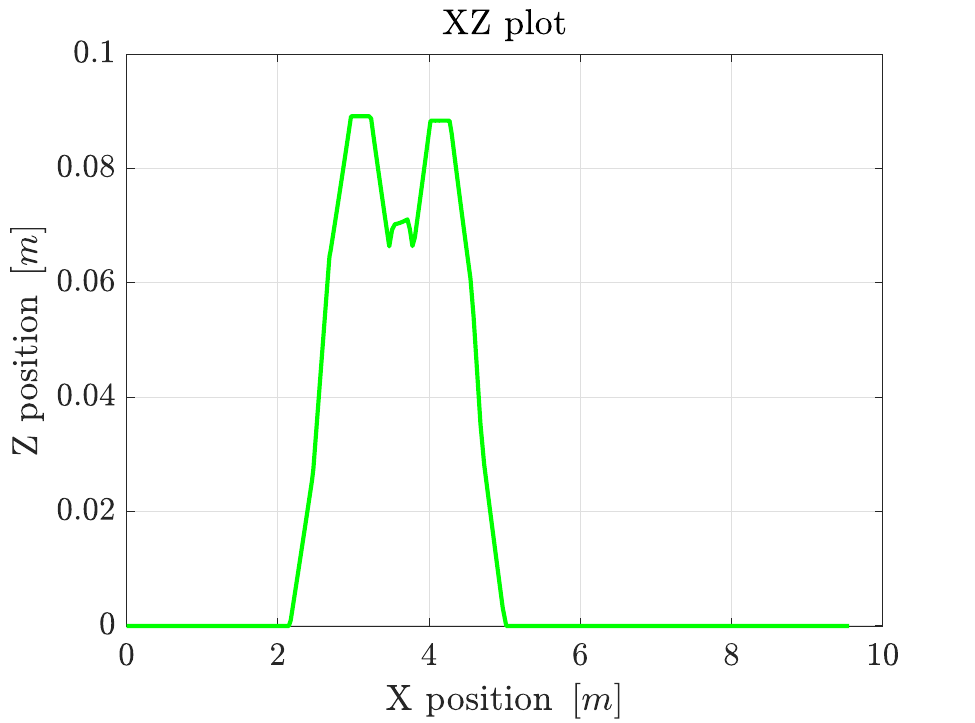}
	\caption{XZ plot depicting the path of vehicle over the obstacle}
	\label{XZ}
\end{figure}
\begin{figure}[htpb!]
	\centering
	\includegraphics[scale=0.55]{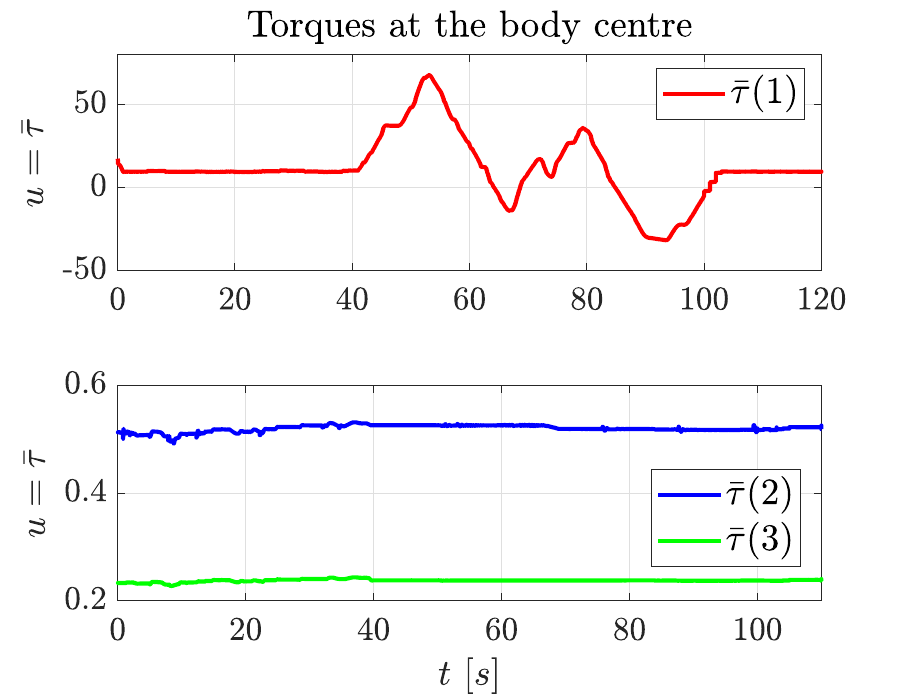}	\caption{Illustrates the control input plot; the increase in control effort is attributed to the obstacle in its path, which necessitates additional input to overcome.}
	\label{torqueG}
\end{figure}
	\begin{figure*}[htpb!]
	\centering
	\includegraphics[scale=0.125]{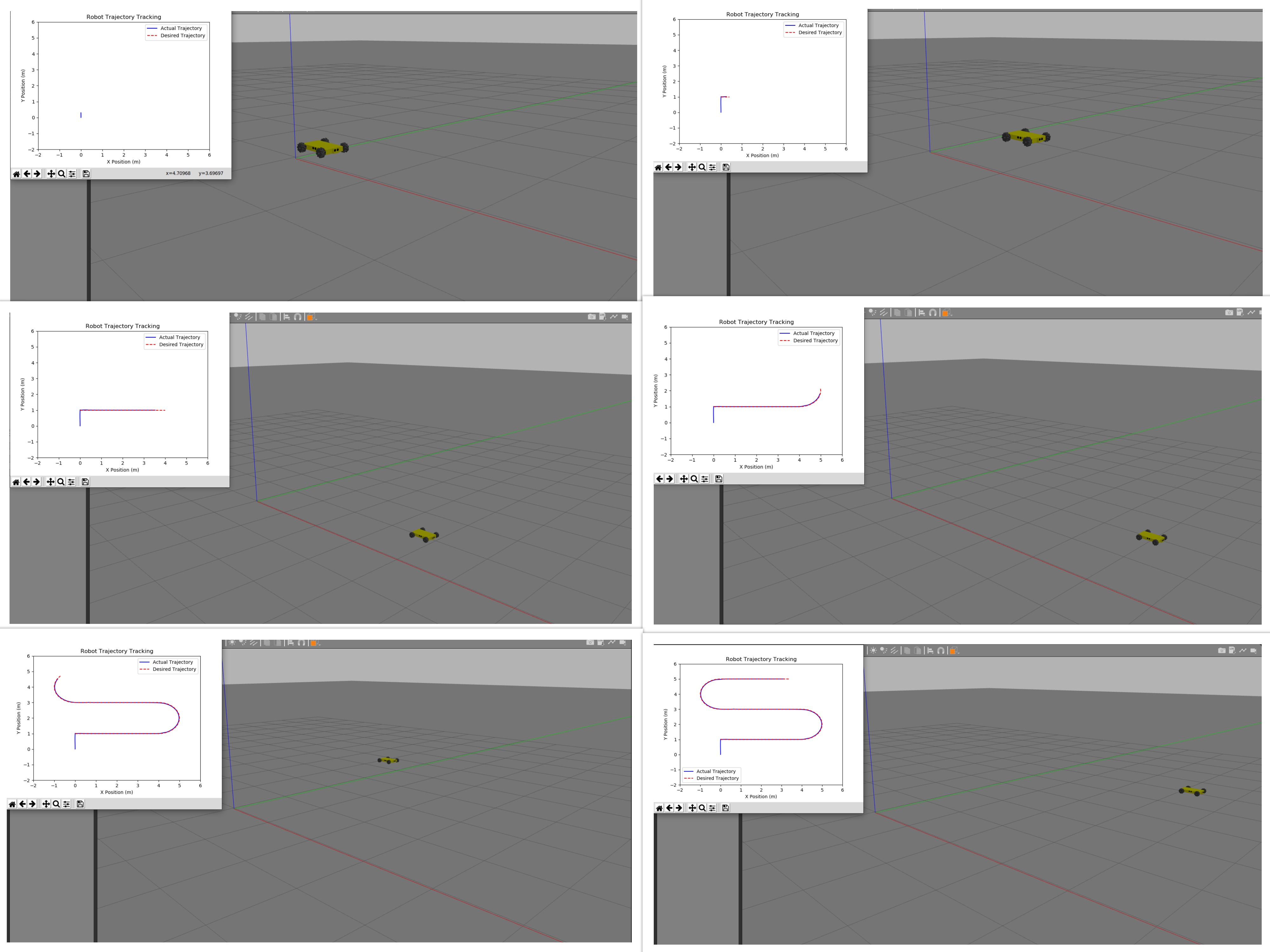}
	\caption{Few snapshots of S shaped trajectory tracking of FWMR in Gazebo-ROS considering zero initial condition for the system states; Sequence of images progresses from the top left to the bottom right, capturing key moments of the trajectory.}
	\label{Sgazebo}
\end{figure*}
\begin{figure*}[htpb]
	\centering
	\includegraphics[scale=0.125]{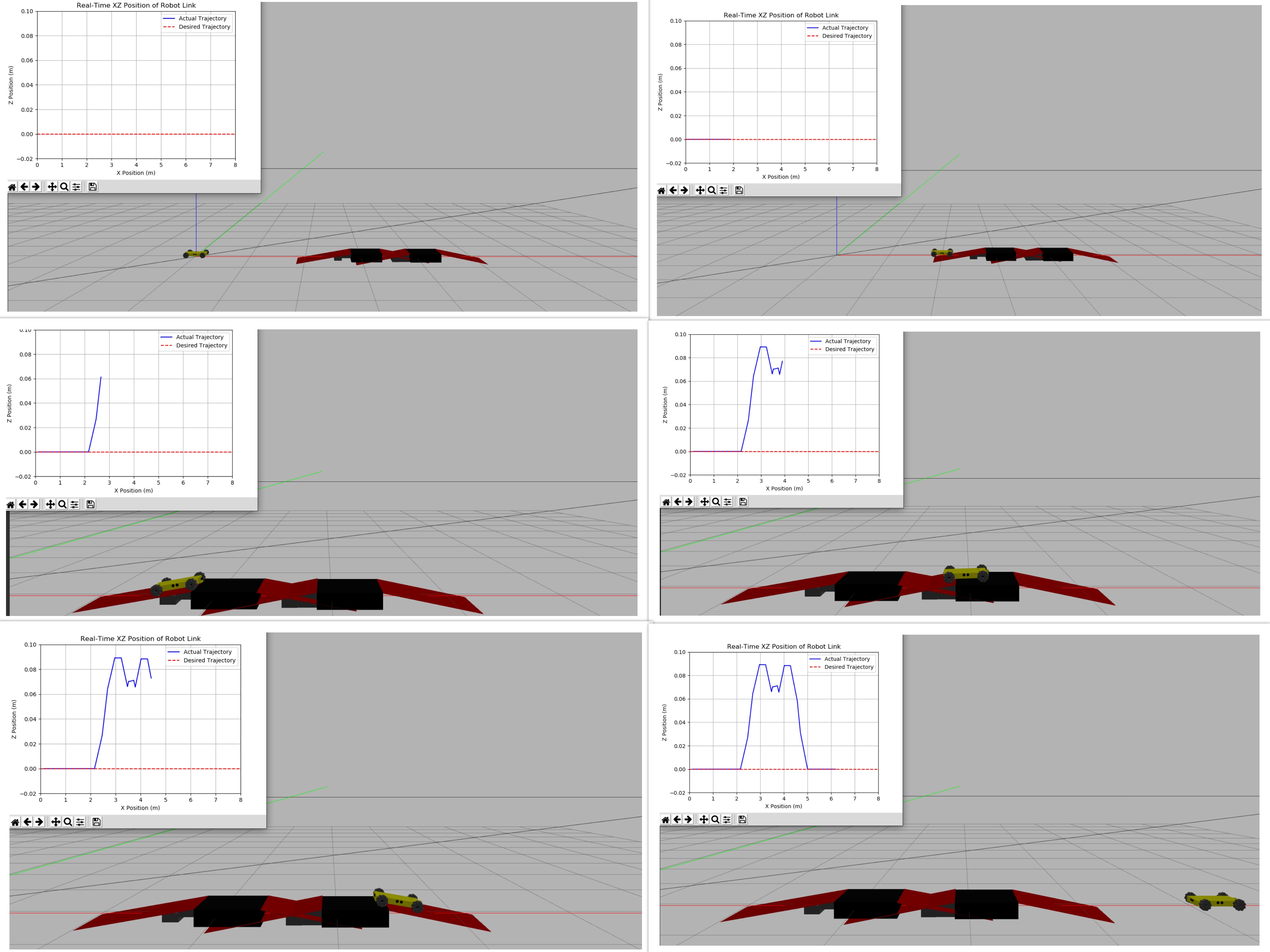}
	\caption{Few snapshots illustrating obstacle avoidance of FWMR in Gazebo-ROS where an obstacle is situated along the maneuver of the vehicle; Sequence of images progresses from the top left to the bottom right, capturing key moments of the trajectory.}
	\label{Obs}
\end{figure*}

	\section{Conclusions}\label{sect:6}
	In this work, we developed and implemented a finite-time backstepping tracking control law and validated its efficacy within a real-time Gazebo ROS simulation platform. A comprehensive proof of closed-loop stability in a finite-time setting, accompanied by explicit computations of guaranteed finite-time  was provided. We also conducted a comparative analysis between finite-time and asymptotic tracking control outcomes for FWMR model in a real-time simulation setting. Our findings illustrate the superior convergence of system states to desired values  for finite-time controller compared to the asymptotic tracking controller in real-time simulation studies. Our future endeavors are aimed to implement the proposed finite-time backstepping tracking control law in real-time FWMR hardware system, anticipating additional challenges posed by deviations from intended trajectories, control oscillations, and response delays. Addressing these challenges will be pivotal in ensuring the robustness and effectiveness of our controller in real-world applications.

%
%
                \clearpage
                \section*{Acknowledgment}
                This work is partially supported by the Ministry of Electronics and Information Technology (MeitY), Government of India, under the sponsored research project 2024-214-EE-SNG-MEITY-SP.
				\bibliographystyle{cas-model2-names}
				\bibliography{cas-refs}

				
%
%
%
			\end{document}